\documentclass[a4paper,12pt]{article}

\usepackage{geometry,amssymb,amsmath,latexsym,amsthm,bm}
\usepackage[utf8]{inputenc} 
\usepackage[T1]{fontenc}
\usepackage{setspace} 
\usepackage{float}          

\usepackage{empheq} 
\usepackage[normalem]{ulem} 
\usepackage{enumitem} 
\usepackage{multirow} 
\usepackage{bbm} 
\usepackage{cases} 
\usepackage[final]{pdfpages} 
\usepackage{bigints} 
\usepackage{comment} 
\usepackage{booktabs} 
\usepackage{eurosym}
\usepackage[font=small,labelfont=bf]{caption}
\usepackage{indentfirst} 
\usepackage{breqn} 
\usepackage{cleveref}
\usepackage{appendix}
\usepackage{booktabs}       
\usepackage{multirow}
\usepackage{algorithm}  
\usepackage{algorithmic}  
\usepackage{multicol}
\usepackage{pgfplots}
\pgfplotsset{compat=1.18} 
\usepackage{enumitem} 

\usepackage[style=apa6, 
            backend=biber,
            natbib=true, 
            url=false,
            date=year,
            doi=false,
            isbn=false]{biblatex}
\DeclareLanguageMapping{american}{american-apa}
\AtEveryBibitem{\clearfield{month}} 

\usepackage{tikz}
\usetikzlibrary{snakes}
\usetikzlibrary{calc} 
\usetikzlibrary{shapes,arrows} 
\usetikzlibrary{arrows,positioning} 

\usepgflibrary{arrows} 
\usepgflibrary[arrows] 
\usetikzlibrary{arrows} 
\usetikzlibrary[arrows] 

\usepackage{graphicx} 
\usepackage{graphics}

\newtheorem{definition}{Definition}

\newtheorem{lemma}{Lemma}

\newtheoremstyle{exampstyle}
  {} 
  {} 
  {} 
  {} 
  {\bfseries\color{blue}} 
  {.} 
  {.5em} 
  {} 
\theoremstyle{exampstyle} 
\newtheorem{rmk}{Remark}
\newtheorem{thm}{Theorem}

\newtheorem{coro}{Corollary}        
\newtheorem{prop}{Proposition}

\theoremstyle{definition} 

\newtheorem{remarktmp}{Remark}
\newenvironment{remark}
{ \begin{remarktmp} 	}
{ 		\medskip\hfill{\LARGE$\lrcorner$}
		\end{remarktmp} } 
\allowdisplaybreaks[1]

\newtheorem{exampletmp}{Example}
\allowdisplaybreaks[1]

\newcommand{\BB}{\mathbf{B}} 
 
\newcommand{\BI}{\mathbf{I}}

\newcommand{\Udelta}{\bar{\delta}_\succ}
\newcommand{\Ugamma}{\bar{\gamma}_\succ}
\newcommand{\Urho}{\bar{\rho}_\succ}

\newcommand{\Lmu}{\underline{\mu}_\succ}
\newcommand{\Ldelta}{\underline{\delta}_\succ}
\newcommand{\Lgamma}{\underline{\gamma}_\succ}
\newcommand{\Lrho}{\underline{\rho}_\succ}

\newcommand{\muS}{\mu_\succ}
\newcommand{\deltaS}{\delta_\succ}
\newcommand{\gammaS}{\gamma_\succ}

\usetikzlibrary{arrows.meta}

\title{Revealed Attentional Interference\thanks{We thank Chris Chambers,  Paola Manzini, Marco Mariotti, Yusufcan Masatlioglu, Hiroki Nishimura, Fernando Payro, Evan Piermont, Tri Phu Vu and Siming Ye, as well as seminar participants at BSE Summer Forum, UC Riverside and the DC Theory Workshop, for helpful comments. Lin gratefully acknowledges financial support from Academia Sinica and Taiwan's National Science and Technology Council.}}
\author{Paul H.Y. Cheung\thanks{Naveen Jindal School of Management, The University of Texas at Dallas. Email: paul.cheung@utdallas.edu} \and Yi-Hsuan Lin\thanks{Institute of Economics, Academia Sinica. Email: yihsuanl@econ.sinica.edu.tw} \and Chung-Hao Sheu\thanks{Email: elbert4529@gmail.com}}
\date{July 15, 2026}

\begin{document}

\maketitle

\begin{abstract}

We study the impact of external stimuli on attention in the Attentional Interference Model, capturing two opposing forces in consideration-set formation: proactive and retroactive interference. Proactive interference limits the permeation of external information, while retroactive interference displaces internally generated considerations. We model these forces using parameters governing  permeation and displacement. In a general setting, we characterize the tight range of these parameters and show that, across several specifications, only an upper bound on permeation is revealed. Imposing monotonicity on internal attention in one case both tightens this upper bound and yields a lower bound. We illustrate our results through simulation.\\
\end{abstract}

\begin{quote}
\textbf{Keywords:} Stochastic Choice, Limited Consideration, Attentional Interference, Revealed Preference

\textbf{JEL:}  D11, D80, D90
\end{quote}


\section{Introduction}

People’s attention is often guided by external stimuli, or attentional cues, that are pervasive in everyday environments. These cues, whether intentional or incidental, can significantly influence decision making. For example, salient displays or recommendations may draw attention to particular alternatives, while observing others’ behavior can expand the set of perceived options. Because an alternative must be considered to be chosen, external stimuli play a central role in shaping which options enter the decision process.\footnote{Such cues arise in a variety of settings, including retail environments (e.g., window displays, \cite{SEN2002277}; mobile in-store advertising, \cite{Mirja_et_al_2017}; end-of-aisle promotions, \cite{PHILLIPS2015141}) and social contexts (e.g., influencer marketing, electronic word-of-mouth, and reference group effects; \cite{Boerman_et_al_2022, daugherty2016ewom, Bearden_et_al_1982}).}

Understanding how external forces affect attention is therefore of first-order importance. Marketers may wish to understand how different advertising formats influence the formation of consideration sets, while policymakers may seek to disentangle intrinsic preferences from the effects of attentional manipulation. This is particularly relevant in the context of libertarian paternalism, where individuals remain free to choose, yet policymakers aim to guide which alternatives are considered. In an environment characterized by scarce attention and abundant alternatives, the ability to direct attention—without restricting choice—becomes a key determinant of economic and behavioral outcomes.

A central empirical challenge in this context is that attention is typically unobserved and costly to measure directly. Although richer measures of attention may be available in some settings, such as laboratory experiments, eye-tracking studies, or clickstream data, these measures are often unavailable, costly to collect, or difficult to link to choice behavior at scale. Moreover, even when exposure to stimuli is observed, the degree to which decision makers attend to those stimuli is inherently subjective and rarely directly measurable. Instead, researchers and practitioners often observe only coarse data, such as choices and the frequency with which certain stimuli or alternatives appear. This raises a fundamental question: to what extent can we infer the impact of external stimuli on attention and choice using only such limited, observational data?

In this paper, we attempt to open the black box of attention formation under external influence. A growing choice-theoretical literature on attention (e.g., \citealp{Manzini2014, Brady2016Menu-DependentFeasibility, Masatlioglu2012RevealedAttention, Lleras2017WhenConsideration}) has introduced unobserved consideration sets into choice models and substantially advanced our understanding of how attention allocation and preferences can be inferred from observed choices. Nevertheless, existing models typically abstract from external stimuli or treat attention as internally determined, leaving open how external forces affect attention in observational environments. To address this gap, we develop a model of stochastic consideration with external stimuli and show how their impact on attention can be inferred from choice data alone.

Specifically, drawing on insights from the psychological literature, we incorporate two opposing forces that arise in the presence of external stimuli: proactive interference and retroactive interference (e.g., \citealp{Underwood1957, MurnaneShiffrin1991, BrunelNelson2003, BurkeSrull1988, KlieglBauml2021, MurphyCastel2022}). Proactive interference refers to the tendency of previously acquired information to impede the encoding or retrieval of subsequently presented information, whereas retroactive interference refers to the tendency of newly acquired information to reduce the accessibility of previously learned information. Translated into the language of consideration-set formation, proactive interference implies that an internal attention rule may prevent external stimuli from entering the consideration set, whereas retroactive interference implies that, conditional on external stimuli being considered, alternatives that would otherwise be considered may be displaced. This distinction matters because the two forces have different welfare implications and distort revealed-preference inference in different ways. Under proactive interference, failure to choose a presented alternative need not imply that it is less preferred, since the alternative may never enter consideration. Under retroactive interference, choosing an externally induced alternative need not reveal preference over alternatives that were displaced from consideration.

In the choice-theoretic literature, the (internal) attention rule is a well-studied object, as in standard attention models (e.g., \citealp{Manzini2014, Brady2016Menu-DependentFeasibility, Masatlioglu2012RevealedAttention, Lleras2017WhenConsideration}). External stimuli have also been examined in various forms. For example, \citet{Natenzon2019}, \citet{GUNEY2018935}, and \citet{LiTangZhang2025Associative} study how exposure to non-choosable options affects choices; \citet{Cheung:2023} and \citet{Ke2021LearningBox} analyze how recommendations influence choices and beliefs; and \citet{Kashaev2023} develops a dynamic model in which a decision maker’s attention and preferences are shaped by peers’ behavior. To the best of our knowledge, this paper is the first choice-theoretic study to explicitly analyze the roles of proactive and retroactive interference in the formation of attention rules.

In \Cref{Sect:Model}, we introduce the framework and formally define the Attention Interference Model. We represent the unobservable internal attention rule and the observable external stimuli using probability distributions over subsets of a menu. The internal attention rule captures the DM’s latent consideration process, reflecting personal factors such as background, past experience, and habitual attention patterns. It corresponds to the standard unobservable consideration mechanism in stochastic choice models. In contrast, the external stimuli describe the frequency with which sets of alternatives are presented to the decision maker (henceforth, DM). The advantage of this modeling choice is that this information is objective and directly observable. Given the external stimuli, proactive and retroactive interference are modeled through two parameters in a two-stage process. First, with probability $\beta$—the \emph{permeation parameter}—external stimuli successfully enter the consideration set, overcoming proactive interference. Conditional on successful entry, with probability $\alpha$—the \emph{displacement parameter}—the external stimuli displace the internally generated consideration set, reflecting retroactive interference. Finally, the DM chooses the most preferred alternative according to their underlying preferences.

In \Cref{Sect:RevealAttInf}, we first define the object of interest: the revealed range of the underlying parameters. We then show that this range can be succinctly characterized by an inequality involving the choice rule and the external stimuli. Using the best and worst alternatives in each set, we derive intuitive bounds on $\alpha$ and $\beta$. We also introduce three benchmark cases of the model: Perfect Displacement, where $\alpha=1$; Perfect Retention, where $\alpha=0$; and Unified Attention Interference, where $\alpha=\beta$. We show that, in these cases, only the maximal degree of permeation—an upper bound on $\beta$—is revealed. We further show that these bounds are ordered monotonically, so that Perfect Displacement provides the most conservative (yet strongest) permeation bound over the three cases.

These bounds provide marketers and policymakers with a way to assess the maximal extent to which external stimuli permeate consideration sets. When only external stimuli and choices are observed, the bounds discipline how large the effect of those stimuli on consideration can be. For example, if a promoted alternative is frequently displayed but rarely chosen when it would be chosen upon consideration, the model implies that the stimulus cannot have permeated attention very often. Thus, even without directly observing consideration sets, the analyst can use choice data to rule out overly strong claims about the attentional impact of advertisements, recommendations, or other forms of external exposure.

A natural question is whether one can also identify a lower bound for the permeation parameter. We address this question in \Cref{Sect:NonPara}. We impose a non-parametric restriction on the internal attention rule: the monotonic attention assumption of \citet{Cattaneo2020AModel}. This assumption has broad explanatory power: it encompasses many attention-formation processes while avoiding the misspecification concerns associated with stronger parametric assumptions. We show that, under this assumption, a lower bound on the revealed range of $\beta$ can be obtained. The key idea is to exploit violations of regularity induced by external stimuli. If the observed choice data violate the regularity restrictions implied by monotonic internal attention, then, within the model, these violations must be attributed to the influence of external stimuli. This allows us to infer a minimum level of permeation.

This lower bound admits a direct empirical interpretation: it identifies cases in which external stimuli are not merely consistent with the observed choice pattern, but are necessary to explain it under monotonic internal attention. This is particularly valuable in applications where the analyst wants to assess whether an intervention, recommendation, advertisement, or other attentional cue had a meaningful effect on consideration. While the upper bounds in \Cref{Sect:RevealAttInf} discipline how large the effect of external stimuli can be, the lower bound provides evidence that the effect must be positive and quantifies the minimum degree of permeation required by the data.

Finally, in \Cref{Sect:Simulation}, we present simulation results to illustrate the empirical content of the model. We generate choice data from the model under a range of internally monotonic attention processes and examine how the additional monotonicity restriction sharpens inference. The simulations show that this restriction narrows the revealed range of the permeation parameter and produces informative lower bounds that move closely with the true underlying value. We also use the simulations to study the revealed-preference content of the model. The results indicate that the proposed approach can reject many incorrect preference orderings and recover a substantial part of the underlying preference relation, even with limited data.

\section{The framework}\label{Sect:Model}

\subsection{Preliminaries}

Let $X$ be a finite grand set of alternatives, and let $\mathcal{D}$ denote a domain of menus consisting of subsets of $X$. The core of our analysis focuses on the \emph{attention rule}, a central object in the attention-based choice literature (e.g., \citealp{Manzini2014, Cattaneo2020AModel, Brady2016Menu-DependentFeasibility}), which summarizes how frequently different subsets of alternatives are considered. We begin by formally defining this concept.

\begin{definition}[Attention Rule]
    An attention rule $\mu$ is a map $2^X \times \mathcal{D}\rightarrow [0,1]$ such that $\sum_{ A \subseteq S} \mu(A|S)=1$ and $\mu(A|S)=0$ for $A\notin 2^S$ or $A=\emptyset$.  We denote the collection of such attention rule by $\mathcal{M}$. 
\end{definition}

Attention rules have proven to be a powerful tool for revealed preference analysis. To connect attention rules to observable behavior, we next introduce choice data. A (random) choice rule assigns to each alternative $x$ in a menu $S$ the probability with which it is chosen, denoted by $\rho(x|S)$.

\begin{definition}[Choice Rule]
    A choice rule is a map $\rho:X \times \mathcal{D}\rightarrow [0,1]$ such that $\sum_{x \in S} \rho(x|S)=1$ for $S \in \mathcal{D}$, and $\rho(x|S)=0$ for $x \notin S$.  We denote the collection of such choice rule by $\mathcal{C}$.
\end{definition}

Because consideration sets are inherently unobservable, they introduce an additional layer of complexity relative to traditional revealed preference analysis. To conduct revealed preference analysis with limited consideration, it is standard to assume that the observed choice rule is generated by a pair consisting of an (unobserved) attention rule and a preference relation. Specifically, the decision maker (DM) maximizes her preference over a randomly realized consideration set. Randomness in consideration therefore induces randomness in observed choice. For convenience, we define the resulting induced choice rule.

\begin{definition}[Induced Choice Rule]
    Given a strict preference relation $\succ$ and an attention rule $\mu$, the induced choice rule $\mu_\succ \in \mathcal{C}$ is defined by 
 $$\mu_\succ(x|S):=\sum_{A\in S(x,\succ) } \mu(A|S)$$
 where $S(x,\succ) :=\{B\subseteq S: x\succ y\ \forall\ y\in B\setminus\{x\}\}$.
\end{definition}

It is standard in the literature to say that a choice rule $\rho$ is \emph{explained} by an attention rule $\mu$ and a preference relation $\succ$ if $\rho = \mu_{\succ}$. While this formulation offers maximal flexibility in modeling attention, it is well known that the resulting model lacks predictive power. In particular, any observed choice behavior can be rationalized entirely through attention, independently of preferences. It is because one can always match choice probabilities with probabilities over singleton consideration sets, rendering preferences irrelevant. We formalize this observation below.\footnote{Specifically, if we take $\mu$ such that $\mu(\{x\}|S)=\rho(x|S)$ for all $S\in\mathcal{D}$ and all $x\in S$, then $\rho=\mu_\succ$ for any preference $\succ$.}

\begin{remark}\label{rmk:any_choice_go}
 For any $\rho \in \mathcal{C}$ and any $\succ$, there exists $\mu$ such that $\rho=\mu_\succ$.   
\end{remark}

Finally, we introduce a shorthand that will be useful for characterization and identification results.

\begin{definition}[Cumulative Choice Rule] Given a strict preferennce $\succ$, the  upper and lower cumulative choice rules for a choice rule $\rho$  are denoted by $\Urho$ and $\Lrho$ respectively, where 
 $$\Urho(x|S):=1-\sum_{y:   x\succ y } \rho(y|S)  \ \ \ \text{ and } \ \ \ \Lrho(x|S):=1-\sum_{y:   y\succ x } \rho(y|S)  $$
The upper and lower cumulative choice rules for induced choice rule $\mu_\succ$ are analogous.
\end{definition}


\subsection{Attentional Process}

Before introducing the process, we first conceptually discuss the two main objects in our investigation.

\textsc{Internal Attention Allocation---}  
Internal attention allocation captures the decision maker’s endogenous process for directing attention in the absence of external stimuli. This process encompasses goal-directed behavior, habitual patterns, and memory-based mechanisms that shape which alternatives are likely to be considered. For example, a decision maker may systematically attend only to the top-$N$ search results on a webpage, rely on previously successful options recalled from memory, or default to familiar brands due to past experience. In this sense, internal attention allocation reflects pre-existing cognitive structures—such as habits, learned heuristics, salience shaped by past choices, or memory retrieval—that govern attention allocation prior to any external intervention. The outcome of this process is an internal attention rule, generically denoted by $\delta \in \mathcal{M}$, which summarizes the frequency with which different subsets of alternatives are considered when facing a menu. Importantly, the internal attention allocation process is assumed to be inherently unobservable to the analyst, as it is driven by latent cognitive and experiential factors.

\textsc{External Attention Stimuli---}  
External attention stimuli capture any exogenous sources of attention that are presented to the decision maker. The nature of these stimuli depends on the context under study. For instance, when analyzing advertising effects, external stimuli may correspond to the advertisements displayed within a store or on a digital platform. We allow the stimuli to be stochastic so that the same DM can be exposed to different advertisments in different occasions. Such stimuli induce a frequency distribution over \textit{presentation sets}, representing the subsets of alternatives that are externally highlighted to the decision maker. Formally, we denote this distribution by $\gamma \in \mathcal{M}$, where $\gamma(A|S)$ captures how frequently the presentation set $A$ is shown when the available menu is $S$. In contrast to internal attention allocation, we assume that external stimuli—and hence the induced presentation frequencies—are observable.

Having introduced these two constructs, we now describe the two opposing forces that govern attention formation in our model.

\begin{itemize}
    \item \textsc{Proactive Interference}: Pre-existing internal attention inhibits the incorporation of externally presented alternatives. That is, alternatives generated by external stimuli may fail to enter the consideration set because they conflict with, or are crowded out by, internally driven attention.
    
    \item \textsc{Retroactive Interference}: Conditional on external stimuli entering consideration, newly presented alternatives may suppress or displace alternatives that would otherwise be considered under the internal attention rule.
\end{itemize}

In our model, we formalize attention formation as a two-stage process. First, we assume that decision makers exhibit a natural resistance to incorporating new external stimuli into their consideration sets. To capture this resistance, we introduce a \emph{permeation parameter} $\beta \in [0,1]$, which represents the probability that the presentation set $R$ realized from the external stimuli $\gamma$ permeates the process of consideration set formation.

If the external stimuli fail to permeate, the decision maker considers only the set of alternatives $T$ generated by the internal attention rule $\delta$. If, instead, the external stimuli successfully permeate, the set $R$ enters the final consideration set. Conditional on this entry, with probability $\alpha \in [0,1]$—the \emph{displacement parameter}—the decision maker discards the alternatives $T \setminus R$ from consideration, reflecting retroactive interference. With the remaining probability, the alternatives in $T \setminus R$ are retained and jointly considered with $R$.


This process is illustrated in \Cref{fig:AttProcess}. We provide the formal definition below.

\begin{figure}
    \centering

\begin{tikzpicture}[
    >=Stealth, 
    thick,
    xscale=1.4, 
    dot/.style={circle, fill, inner sep=1.5pt},
    txt/.style={align=center, font=\small}
]

    \node (gamma) at (-0.8,0) [txt] {External\\Stimuli};
    
    \node[dot] (j1) at (1.8,0) {};
    
    \draw[->] (gamma) -- node[below, align=center, font=\scriptsize, text=blue] {$R$ Realized \\ \textit{w.p.} $\gamma(R,S)$} (j1);
    
    \node (delta) at (1.8,1.8) [txt] {Internal\\Attention};
    
    \draw[->] (delta) -- node[right, align=center, font=\scriptsize, text=red] {$T$ Realized \\ \textit{w.p.} $\delta(T,S)$} (j1);

    \draw[dashed, gray] (3.2, 2.2) -- (3.2, -2);
    
    \node at (4.85, 2.2) [txt] {\textbf{Proactive} \\ \textbf{Interference}};

    \node[dot] (j2) at (4.4,0) {};
    \draw[->] (j1) -- (j2);
    
    \draw[->] (j2) -- (4.4,-1.5) node[below, txt] {${\color{red}T}$};
    \node[right, align=center, font=\scriptsize] at (4.4,-0.75) {$R$ Blocked \\ \textit{w.p.} $1-\beta$};
    
    \draw[->] (j2) -- (7.5,0) 
        node[pos=0.25, above, align=center, font=\scriptsize, text=blue] {$R$ admitted \\ \textit{w.p.} $\beta$};

    \draw[dashed, gray] (6.5, 2.2) -- (6.5, -2);
    
    \node at (8.1, 2.2) [txt] {\textbf{Retroactive} \\ \textbf{Interference}};

    \node[dot] (j3) at (7.5,0) {};
    
    
    \draw[->] (j3) -- (9.4,0) node[right, txt] {${\color{blue}R}$};
    \node[above, align=center, font=\scriptsize] at (8.5,0) {$T$ Displaced \\ \textit{w.p.} $\alpha$};
    
    \draw[->] (j3) -- (9,-1.2) node[right, txt] {${\color{red}T} \cup {\color{blue}R}$};
    \node[below left, align=center, font=\scriptsize, text=red] at (8.3,-0.5) {$T$ Retained \\ \textit{w.p.} $1-\alpha$};

\end{tikzpicture}

    \caption{Attentional Process}
    \label{fig:AttProcess}
\end{figure}

\begin{definition}[$\alpha$-$\beta$ Interference]
    We say that an attention rule $\mu \in \mathcal{M}$ results from $\alpha$-$\beta$ interference exerted by external stimuli $\gamma$  on an internal attention rule $\delta$ if there exists an internal allocation rule $\delta \in \mathcal{M}$, permeability parameter $\beta \in [0,1]$ and displacement parameter $\alpha \in [0,1]$ such that 
$$\mu(A|S)=(1-\beta)\delta(A|S) + \beta \left( \alpha \gamma(A|S)  + (1-\alpha)\sum_{(B,C): B\cup C =A} \gamma(B|S)\delta(C|S) \right).$$ 
\end{definition}



Following the existing literature, we assume that the decision maker (DM) is endowed with a strict preference relation (i.e., a linear order) over the set of available alternatives. We now provide the formal definition of the model.


\begin{definition}[Attentional Interference Model] \label{Definition: AIM}
A pair of choice rule and external stimuli  $(\rho,\gamma)$ is  represented by  Attentional Interference model (AIM)  if there exists a linear order $\succ$, an attention rule $\mu$  resulting from $\alpha$-$\beta$ interference exerted by $\gamma$ on $\delta$ such that $\mu_\succ = \rho$.  In such a case, we say $(\rho,\gamma)$ admits an $(\succ,\mu, \alpha,\beta,\delta)$ AIM representation.
\end{definition}

\section{Revealed Attentional Interference} \label{Sect:RevealAttInf}
Guided by the Attention Interference Model (AIM), our objective is to infer the extent to which individuals’ attention is affected by external stimuli. Our analysis proceeds in two levels of specification. First, we study the general version of the model and provide a characterization of the set of admissible parameters. We then examine three special cases of the model in which the displacement parameter is fixed at $1$, $0$, or equal to the permeability parameter $\beta$. In the subsequent section, we further restrict the model by imposing an additional assumption on the internal attention rule. Through this progression, we aim to demonstrate how increasingly strong, nested assumptions on the attention process sharpen the range of feasible inference. 

We begin by formally defining the inference problem. Following the conventional conservative approach in the revealed attention literature (e.g., \citealp{Masatlioglu2012RevealedAttention}), a value of $\beta$ is rejected if there exists no model representation consistent with the observed data.

\begin{definition}[Revealed Attentional Interference]\label{Def:RevealedAtteInf}
We define the identified range of attentional interference as
$$\BI(\rho,\gamma):=\{ (\alpha,\beta) \in \mathcal{A} : \text{$(\rho,\gamma)$ admits an $(\succ,\mu, \alpha,\beta,\delta)$ AIM representation for some} \succ, \mu \text{ and } \delta\}.$$
where $\mathcal{A}:=[0,1]\times[0,1)$.
\end{definition}

The identified range is inherently model-based. Consequently, establishing its non-emptiness requires the existence of an underlying representation consistent with the data. This motivates an axiomatic analysis, which provides necessary and sufficient conditions for such a representation to hold given a dataset $\mathcal{D}$. For expositional clarity, we exclude the boundary case $\beta = 1$. This restriction is innocuous and can be relaxed at the cost of introducing additional boundary considerations without affecting the substance of our results.\footnote{Therefore, we also implicitly assume away the knife-edge case that $\rho=\gamma_\succ$ for some $\succ$.}

\subsection{Identification}\label{Sect: general}

We begin with a simple but important observation. The following proposition states that the identified range always includes zero permeability of external stimuli, regardless of the degree of displacement.

\begin{prop}[Existence of Zero-Permeability Explanation]\label{Prop: Close interval of beta}
    $(\alpha,0) \in \BI(\rho,\gamma)$ for any $\alpha \in [0,1]$ 
\end{prop}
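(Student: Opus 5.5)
The plan is to exhibit an explicit AIM representation with $\beta=0$, leaning on \Cref{rmk:any_choice_go}. When $\beta=0$, the $\alpha$-$\beta$ interference formula reduces to
$$\mu(A|S)=(1-0)\,\delta(A|S)+0\cdot\Big(\alpha\gamma(A|S)+(1-\alpha)\sum_{(B,C):B\cup C=A}\gamma(B|S)\delta(C|S)\Big)=\delta(A|S)$$
for every $\alpha\in[0,1]$. So the displacement parameter drops out completely, and an attention rule produced by $\alpha$-$0$ interference is just an arbitrary internal attention rule $\delta\in\mathcal{M}$, for any $\gamma$.

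First I will fix an arbitrary linear order $\succ$ on $X$. Then I will define the internal rule by putting all its mass on singletons:
$$\delta(\{x\}|S)=\rho(x|S)\quad\text{for } x\in S,$$
and $\delta(A|S)=0$ for every $A\subseteq S$ that is not a singleton. I will check that $\delta\in\mathcal{M}$. Its weights are in $[0,1]$. They sum to $\sum_{x\in S}\rho(x|S)=1$. It gives zero mass to $\emptyset$ and to sets not contained in $S$.

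Next I will set $\mu=\delta$; by the reduction above, this $\mu$ results from $\alpha$-$0$ interference exerted by $\gamma$ on $\delta$, whatever $\alpha$ is. For $x\in S$, the only set in $S(x,\succ)$ carrying positive $\mu$-mass is $\{x\}$. Hence $\mu_\succ(x|S)=\rho(x|S)$, which is exactly the construction in \Cref{rmk:any_choice_go}. So $(\rho,\gamma)$ admits an $(\succ,\mu,\alpha,0,\delta)$ AIM representation. Finally, $(\alpha,0)\in[0,1]\times[0,1)=\mathcal{A}$, so $(\alpha,0)\in\BI(\rho,\gamma)$.

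I do not expect a genuine obstacle here. The only point that needs care is the boundary remark that the attention rule must assign zero mass to the empty set. This is satisfied because $\rho(\cdot|S)$ is a probability distribution on the nonempty menu $S$, so all of $\delta$'s mass sits on nonempty singletons.
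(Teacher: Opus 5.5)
Your proposal is correct and is essentially the paper's own argument: at $\beta=0$ the interference formula collapses to $\mu=\delta$ for every $\alpha$, and the singleton construction behind \Cref{rmk:any_choice_go} then gives $\mu_\succ=\rho$ for any $\succ$, which also yields \Cref{Coro:AnyRhoAIM}. Your step that only $\{x\}$ contributes to $\mu_\succ(x|S)$ relies on reading $S(x,\succ)$ as the subsets that contain $x$ and in which $x$ is $\succ$-best; this is the intended reading, and the paper's footnote to that remark uses it too.
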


Given a choice data $\rho$ and external stimuli $\gamma$, one cannot rule out the possibility that the stimuli always never permeate through consideration set. This is based on the well-known fact in the attention literature where one can attribute everything to attention when explaining choice probability by employing singleton consideration set as stated in \Cref{rmk:any_choice_go}. Therefore, an immediately corollary is that any choice probability can be explained under this model with any preference $\succ$ and any displacement parameter $\alpha$.

\begin{coro}\label{Coro:AnyRhoAIM}
For every $\succ$ and every $\alpha$, any $(\rho,\gamma)$ admits  an $(\succ,\mu,\alpha,0,\delta)$ AIM representation for some $\mu$ and $\delta$.
\end{coro}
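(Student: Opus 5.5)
The corollary follows by specializing the definition of $\alpha$-$\beta$ interference to $\beta=0$ and then invoking \Cref{rmk:any_choice_go}; \Cref{Prop: Close interval of beta} is essentially the same statement with the preference left free.

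First, fix an arbitrary linear order $\succ$, an arbitrary $\alpha\in[0,1]$, and a pair $(\rho,\gamma)$. Set $\beta=0$ in the defining formula of $\alpha$-$\beta$ interference. The term multiplied by $\beta$ vanishes, so for every $S\in\mathcal{D}$ and $A\subseteq S$ we get
\[
\mu(A|S)=(1-0)\,\delta(A|S)+0=\delta(A|S).
\]
Hence, whatever $\gamma$ and $\alpha$ are, any internal rule $\delta\in\mathcal{M}$ produces an attention rule $\mu=\delta$ that results from $\alpha$-$0$ interference exerted by $\gamma$ on $\delta$. The external stimuli and the displacement parameter drop out entirely.

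Second, I choose $\delta$ exactly as in the footnote to \Cref{rmk:any_choice_go}: $\delta(\{x\}|S)=\rho(x|S)$ for all $x\in S$, and $\delta(A|S)=0$ for every non-singleton $A$. I check that this $\delta$ lies in $\mathcal{M}$. Its values are in $[0,1]$, and they sum to $\sum_{x\in S}\rho(x|S)=1$. It assigns zero mass to $\emptyset$ and to sets not contained in $S$, because $\rho(x|S)=0$ for $x\notin S$.

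Then I compute the induced choice rule. A singleton $\{y\}$ lies in $S(x,\succ)$ if and only if $y=x$, so
\[
\mu_\succ(x|S)=\sum_{A\in S(x,\succ)}\mu(A|S)=\mu(\{x\}|S)=\rho(x|S).
\]
This holds for every $\succ$. Therefore $(\rho,\gamma)$ admits an $(\succ,\mu,\alpha,0,\delta)$ AIM representation, as required. The same construction shows $(\alpha,0)\in\BI(\rho,\gamma)$, noting that $0\in[0,1)$ so the pair belongs to $\mathcal{A}$.

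There is no real obstacle in this argument. The only points needing care are that $\beta=0$ makes $\mu$ independent of both $\gamma$ and $\alpha$, and that the singleton-based $\delta$ really is a valid attention rule, in particular that it puts no mass on the empty set.
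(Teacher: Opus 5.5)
Your proposal is correct and matches the paper's own reasoning: setting $\beta=0$ collapses the interference formula to $\mu=\delta$, and the singleton construction from the footnote to \Cref{rmk:any_choice_go} then rationalizes $\rho$ under any $\succ$ and any $\alpha$. Your step that only $\{x\}$ contributes to $\mu_\succ(x|S)$ relies on reading $S(x,\succ)$ as the subsets whose $\succ$-best element is $x$ (so $x\in B$); this is what the paper intends, even though its displayed definition omits the requirement $x\in B$.
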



This observation naturally raises the question of what can be learned about the impact of external stimuli on observed choice behavior. Although zero permeability can never be ruled out, this does not preclude policymakers or researchers from claiming that external stimuli have a substantial effect. The more relevant question is therefore the following: to what extent can we reject theoretically infeasible claims of a high degree of external influence?

To address this question, we seek to identify an upper bound on the impact of external stimuli. We begin by deriving a weak bound on the admissible values of the displacement and permeability parameters, $\alpha$ and $\beta$. Throughout the analysis, we adopt the convention that $\frac{a}{0} = \infty$ for any $a \geq 0$.

\begin{prop}[Weak Bound]\label{prop:WeakBound}
Let $(\rho,\gamma)$ admit an $(\mu,\succ,\alpha,\beta,\delta)$ AIM representation.
Let $w_S:=\min(\succ,S)$ and $b_S:=\max(\succ,S)$. Then, 
\begin{enumerate}
\item[(i)] \emph{(Bounded by the worse)}
\begin{equation}\tag{WB-w}
\alpha\beta \le 
\frac{\rho(w_S|S)}{\gamma_\succ(w_S|S)} \ \ \text{for all \ } S\in\mathcal{D} 
\end{equation}

\item[(ii)] \emph{(Bounded by the best)}
\begin{equation}\tag{WB-b}
\beta \le 
\frac{\rho(b_S|S)}{\gamma_\succ(b_S|S)} \ \  \text{for all \ } S\in\mathcal{D} 
\end{equation}
\end{enumerate}
\end{prop}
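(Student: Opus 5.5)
The plan is to compute the induced choice probabilities $\rho(w_S|S)=\mu_\succ(w_S|S)$ and $\rho(b_S|S)=\mu_\succ(b_S|S)$ directly from the $\alpha$-$\beta$ interference formula, and then drop nonnegative terms. Since $\mu_\succ$ is linear in $\mu$, the representation gives
$$\rho(x|S)=(1-\beta)\delta_\succ(x|S)+\beta\alpha\,\gamma_\succ(x|S)+\beta(1-\alpha)\,\pi_\succ(x|S),$$
where $\pi(A|S):=\sum_{(B,C):B\cup C=A}\gamma(B|S)\delta(C|S)$. This $\pi$ is itself an attention rule: it is the distribution of $R\cup T$ when $R\sim\gamma$ and $T\sim\delta$ are drawn independently. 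Hence $\pi_\succ(x|S)$ is the probability that $x$ is the $\succ$-best element of $R\cup T$.

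\textbf{Part (i).} Take $x=w_S$. The element $w_S$ is chosen from a consideration set only if that set equals $\{w_S\}$. Every term on the right-hand side is nonnegative, so
$$\rho(w_S|S)\ge \alpha\beta\,\gamma_\succ(w_S|S).$$
Dividing by $\gamma_\succ(w_S|S)$ gives (WB-w). If $\gamma_\succ(w_S|S)=0$, the convention $a/0=\infty$ makes the bound trivial.

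\textbf{Part (ii).} Take $x=b_S$. Here $b_S$ is chosen from $R\cup T$ whenever $b_S\in R$. Therefore
$$\pi_\succ(b_S|S)=P(b_S\in R\cup T)\ge P(b_S\in R)=\gamma_\succ(b_S|S),$$
where the last equality uses that $b_S$ is top-ranked in any set containing it. Combining this with the $\alpha$ term gives
$$\rho(b_S|S)\ge \beta\alpha\,\gamma_\succ(b_S|S)+\beta(1-\alpha)\gamma_\succ(b_S|S)=\beta\,\gamma_\succ(b_S|S),$$
after dropping the nonnegative $\delta_\succ$ term. This yields (WB-b), again using the zero-denominator convention when needed.

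The only step needing care is the claim $\pi_\succ(b_S|S)\ge\gamma_\succ(b_S|S)$. To verify it, I would sum the definition over sets $A\ni b_S$ and use $\sum_C\delta(C|S)=1$. The remainder is routine bookkeeping.
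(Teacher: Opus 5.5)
Your proposal is correct and follows the paper's proof essentially step for step. Both arguments drop the nonnegative $\delta$-terms. Both bound $\rho(w_S|S)=\mu(\{w_S\}|S)$ below by $\alpha\beta\gamma(\{w_S\}|S)$. For $b_S$, both bound the union term below by $\gamma_\succ(b_S|S)$; you justify that last step more explicitly, through the independent-union interpretation and the zero-denominator convention.
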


\begin{proof}
    For statement (i), note that the worst alternative is chosen if and only if it is the only considered option. Note also that $\mu(A|S)\geq\alpha\beta\gamma(A|S)$ for all $A,S$. Thus,
    $$\rho(w_S|S)=\mu(\{w_S\}|S)\geq\alpha\beta\gamma(\{w_S\}|S)=\alpha\beta\gamma_\succ(w_S|S).$$
    This shows (i).

    For statement (ii), note that the best alternative is chosen whenever it is considered. Note also that
    \begin{align*}
    \sum_{A\subseteq S:b_S\in A}\mu(A|S)&\geq\beta \left( \alpha \sum_{A\subseteq S:b_S\in A}\gamma(A|S)  + (1-\alpha)\sum_{A\subseteq S:b_S\in A}\sum_{(B,C): B\cup C =A} \gamma(B|S)\delta(C|S) \right)\\
    &\geq\beta\sum_{A\subseteq S:b_S\in A}\gamma(A|S).
    \end{align*}
    Hence we have
    $$\rho(b_S|S)=\sum_{A\subseteq S:b_S\in A}\mu(A|S)\geq\beta\sum_{A\subseteq S:b_S\in A}\gamma(A|S)=\beta\gamma_\succ(b_S|S).$$
    So (ii) holds.
\end{proof}

Notice that \Cref{prop:WeakBound} provides two simple and intuitive upper bounds on the influence parameters. We first focus on part (i), which makes use of the worst alternative according to the underlying preference in the representation. To understand this bound, consider an extreme example in which $\gamma_{\succ}(w_S|S)=1$ but $\rho(w_S|S)=0.1$. By definition, $\gamma_{\succ}(w_S|S)=\gamma(\{w_S\}|S)$, implying that the worst alternative is always presented in isolation, yet it is chosen only $10\%$ of the time. 

First, consider the case of perfect displacement, $\alpha=1$. The bound then implies that $\beta$ cannot exceed $0.1$. Under perfect displacement, the alternative $w_S$ is chosen whenever it permeates the consideration set. Hence, it can permeate attention at most $10\%$ of the time. Second, consider the case of perfect permeation, $\beta=1$. In this case, the bound implies that $\alpha$ cannot exceed $0.1$. Under perfect permeation, the alternative $w_S$ is chosen whenever it displaces the initial consideration set. Thus, it can displace internal attention at most $10\%$ of the time.

Part (ii) of the proposition provides a complementary bound that depends only on the permeation parameter $\beta$. The interpretation is analogous. Consider the same example in which $\gamma_{\succ}(b_S|S)=1$ but $\rho(b_S|S)=0.1$. By definition,
$\gamma_{\succ}(b_S|S)=\sum_{A \ni b_S}\gamma(A|S)$,
so every presentation set includes the best alternative $b_S$ according to the underlying preference. Since $b_S$ is the best alternative, it is chosen whenever it permeates the consideration set, regardless of whether it displaces the initial consideration set. Consequently, the external stimuli can permeate attention at most $10\%$ of the time.

Moreover, to obtain a complete characterization of $\BI(\rho,\gamma)$, the identified set must be linked to a representation that specifies an underlying preference relation. To this end, we introduce the \emph{preference-specific} revealed range of attention parameters:
\[
\BI_{\succ}(\rho,\gamma)
:= \{ (\alpha,\beta) \in \mathcal{A} :
(\rho,\gamma) \text{ admits an } (\succ,\mu,\alpha,\beta,\delta)\text{-AIM representation for some }
\mu \text{ and } \delta \}.
\]
By definition, it follows immediately that
\[
\BI(\rho,\gamma) = \bigcup_{\succ \in \mathcal{P}} \BI_{\succ}(\rho,\gamma).
\]

Before stating the main theorem, we introduce a key inequality—referred to as the \emph{Attentional Interference Inequality}—which will play a central role in the characterization that follows.

\begin{definition}
The Attentional Interference Inequality is a point-wise inequality  over functions $\rho,\Urho,\gammaS,\Ugamma$ with parameters $\alpha,\beta$ such that
\begin{equation}\tag{AII}\label{eq:RAI}
\alpha\beta \Big(\rho(1-\Ugamma)+\gammaS(\Urho-\beta)\Big)
\;\le\;
\rho(1-\beta\Ugamma)+\beta\gammaS(\Urho-1).
\end{equation}
\end{definition}

With this definition, we are really to state the main result for the general model.

\begin{thm}[Revealed Attentional Interference]\label{Thm:RevealAttInt}
Given a dataset $\mathcal{D}$, $$\BI_\succ(\rho,\gamma)
=
\Bigl\{(\alpha,\beta)\in \mathcal{A}: (\alpha,\beta)\text{ satisfies (AII)}\Bigr\}.$$ Furthermore, if $S\in\mathcal{D}$ implies $|S|=2$, then
$$\BI_\succ(\rho,\gamma)
=
\Bigl\{(\alpha,\beta)\in \mathcal{A}:
(\alpha,\beta) \text{ satisfies (WB-b) and (WB-w)}
\Bigr\}.$$
\end{thm}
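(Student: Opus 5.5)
The plan is to fix $\succ$ and $S\in\mathcal{D}$ and reduce the existence of an AIM representation to a recursive system for the distribution of the $\succ$-best element of the internal consideration set. Everything happens menu by menu, because $\delta$ and $\mu$ are unrestricted across menus. The reduction rests on two observations. First, $\mu_\succ(\cdot|S)$ depends on $\mu(\cdot|S)$ only through the law of $\max(\succ,A)$ for the realized consideration set $A$. Second, the three branches of the process give consideration sets $T$, $R$, and $T\cup R$, and
\[
\max(\succ,T\cup R)=\max\bigl(\max(\succ,T),\max(\succ,R)\bigr).
\]
So only the law of $\max(\succ,T)$ under $\delta$ matters. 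Any such law $d(\cdot|S)$ on $S$ is attainable by setting $\delta(\{x\}|S)=d(x|S)$, exactly as in \Cref{rmk:any_choice_go}. Hence $\BI_\succ(\rho,\gamma)$ consists of those $(\alpha,\beta)\in\mathcal{A}$ for which, on every $S$, there is a probability vector $d$ reproducing $\rho(\cdot|S)$.

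For the computation I will work with strict lower cumulatives. For each $x\in S$, let $L_\rho(x)$, $L_R(x)$, $L_T(x)$ be the probabilities that the relevant maximum is strictly $\succ$-below $x$. In the paper's notation, $L_\rho(x)=1-\Urho(x|S)$ and $L_R(x)=1-\Ugamma(x|S)$. Conditioning on the branch, and using independence of $T$ and $R$ on the retention branch, gives
\[
L_\rho=(1-\beta)L_T+\alpha\beta L_R+\beta(1-\alpha)L_RL_T .
\]
This is linear in $L_T$ with coefficient $1-\beta+\beta(1-\alpha)L_R\ge 1-\beta>0$ (here $\beta<1$ is used). So $L_T$ is uniquely determined:
\[
L_T=\frac{L_\rho-\alpha\beta L_R}{1-\beta+\beta(1-\alpha)L_R}.
\]
A representation exists iff the implied $d$ is a probability vector. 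Since $L_\rho=L_R=0$ at $w_S$, we get $L_T(w_S)=0$. The same formula applied to the non-strict cumulatives gives value $1$ above $b_S$. So the only binding requirement is monotonicity: for each $x$, the value of $L_T$ at $x$ must not exceed its value once the mass at $x$ is added. The latter is obtained by replacing $L_\rho$ with $L_\rho+\rho(x|S)$ and $L_R$ with $L_R+\gammaS(x|S)$.

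This gives exactly one inequality per pair $(S,x)$, namely $d(x|S)\ge0$. The main (purely algebraic) step is to cross-multiply the two fractions, both of which have positive denominators, and simplify. Writing $\rho,\Urho,\gammaS,\Ugamma$ for their values at $(x|S)$, I expect the result to be precisely (AII). The necessity direction then follows by reading the argument backwards: any representation yields such an $L_T$, hence satisfies (AII). If the simplification does not match (AII) verbatim, I would instead solve the recursion for $d(x|S)$ directly and check that its sign condition is (AII).

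For the binary case $S=\{w,b\}$ with $b\succ w$, I would evaluate (AII) at the two points.
\begin{itemize}
\item At $w$ we have $\Urho=\Ugamma=1$, and (AII) reduces to $\alpha\beta\gammaS(w|S)(1-\beta)\le\rho(w|S)(1-\beta)$, which is (WB-w).
\item At $b$ we have $\Urho=\rho(b|S)$ and $\Ugamma=\gammaS(b|S)$. Then (AII) collapses to $(1-\alpha\beta)\bigl(\rho(b|S)-\beta\gammaS(b|S)\bigr)\ge0$. Since $\alpha\beta\le\beta<1$, this is equivalent to (WB-b).
\end{itemize}
Hence in the binary case (AII) is equivalent to the conjunction of (WB-b) and (WB-w), which is the second claim.
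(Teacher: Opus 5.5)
Your proposal is correct and follows essentially the same route as the paper. The paper's Lemma 1 is your strict-lower-cumulative identity $L_\rho=(1-\beta)L_T+\alpha\beta L_R+\beta(1-\alpha)L_RL_T$; the paper inverts it, turns monotonicity of the implied internal cumulative into (AII) by cross-multiplication (which does give (AII) verbatim), builds a singleton $\delta$, and then evaluates (AII) at $w_S$ and $b_S$ for binary menus, with your uniform treatment of $x=b_S$ (top value $1$) slightly cleaner than the paper's separate derivation of (WB-b) there.
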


\begin{proof}
    The proof is in \Cref{App:Proofs}.\footnote{All proofs are in the appendix unless specified otherwise.}
\end{proof}

\Cref{Thm:RevealAttInt} provides a complete characterization of the range of values that the parameters $\alpha$ and $\beta$ can take, given the observed choice data $\rho$ and external stimuli $\gamma$. Owing to the nonlinear nature of the attentional process, the Attentional Interference Inequality (AII) is inherently nonlinear in the parameters; in particular, it is quadratic in $\beta$. Interestingly, the second part of the theorem shows that when the dataset $\mathcal{D}$ consists only of menus with two alternatives, the weak bounds derived in \Cref{prop:WeakBound} are tight. In this case, the Attentional Interference Inequality reduces to the bounds (WB-b) and (WB-w) established in \Cref{prop:WeakBound}.

To see this, consider $S$ such that $|S|=2$. Hence, it must be that $S=\{b_S,w_S\}$ for a given $\succ$. We first consider the case of the worse alternative $w_S$ in $S$. Since $w_S$ is worse in $S$, $\Ugamma(w_S|S)=\Urho(w_S|S)=1$.  We skip denoting $(w_S|S)$ for clarity. Hence, (AII) reduces to 
\begin{align*}
\alpha \beta \Big(\rho(1-1)+\gammaS(\rho-\beta) \Big)&\leq \rho(1-\beta)+\beta\gammaS(1-1) \\
    \alpha \beta \gammaS(1-\beta)& \leq \rho(1-\beta)\\
     \alpha \beta &\le \frac{\rho}{\gammaS} 
\end{align*} where $\beta<1$. Hence, it gives the (WB-w) for $w_S \in S$. On the other hand, consider the case of the best alternative $b_S$ in $S$. Since $b_S$ is the best in $S$, we have $\Ugamma(b_S|S)=\gammaS(b_S|S)$ and $\Urho(b_S|S)=\rho(b_S|S)$. We skip denoting $(b_S|S)$ for clarity.  Hence, (AII) reduces to 
\begin{align*}
\alpha \beta \Big(\rho -\gammaS\rho+\gammaS\rho-\beta\gammaS\Big)& \leq \rho -\rho\beta\gammaS +\rho\beta\gammaS-\beta\gammaS\\
    \alpha \beta ( \rho -\beta\gammaS) &\leq  \rho  -\beta\gammaS \\
   0&\leq (1-\alpha\beta)(\rho-\beta\gammaS)\\ 
   \beta &\leq \frac{\rho}{\gammaS}
\end{align*} where $\alpha\beta-1<0$. Hence, it gives (WB-b) for $b_S \in S$. 

More generally, the shape and bounds of the Attentional Interference Inequality (AII) depend on the specific choice data $\rho$ and external stimuli $\gamma$. To illustrate this, \Cref{fig:AIIExample} plots the AII for three distinct cases by varying $\Urho$ while keeping other values fixed at $(\rho, \gammaS, \Ugamma) = (0.15, 0.3, 0.8)$. 

In all three cases, the decision maker chooses $x$ less frequently than the external stimuli suggest ($\rho = 0.15 < 0.3 = \gammaS$), and $x$ is a relatively low-ranked alternative within the stimuli ($\gammaS < \Ugamma = 0.8$). A common feature across all three panels is that in the extreme case of Perfect Displacement ($\alpha=1$), the permeation parameter $\beta$ is bounded above by $0.5$ (indicated by the dashed gray line). This threshold implies that under full displacement, since the choice only matches the external stimuli half of the time, permeation can only be effective half of the time.

The transition across the panels highlights how the AII for a specific $(x,S)$ behaves as we approach the opposite extreme of Perfect Retention ($\alpha=0$). In the left panel ($\Urho=0.8$), the choice data suggests $x$ is highly retained, allowing for a wider range of $\beta$ even at low displacement levels. In the middle panel ($\Urho=0.4$), the boundary simplifies to a constant horizontal line, making the permeation limit independent of $\alpha$. Finally, in the right panel ($\Urho=0.15$), where $\Urho$ is low, the bound becomes significantly more restrictive as $\alpha$ decreases, ``pinching'' the allowed values of $\beta$ toward the origin. 

While this may suggest that the thresholds of permeation at either extreme of displacement do not dominate one another, this is not the case when considering the AII over the entire domain of the function. In fact, one can make more specific predictions by focusing on the revelation of permeation, which we discuss in greater detail in the next section.

\begin{figure}[ht]
\centering

\begin{tikzpicture}
\begin{axis}[
  title={(0.15, 0.8, 0.3, 0.8)},
  xlabel={$\alpha$},
  ylabel={$\beta$},
  xmin=0, xmax=1,
  ymin=0, ymax=1,
  xtick={0, 1},
  ytick={0, 0.5, 1},
  axis lines=left,
  width=5cm, height=4.5cm,
  grid=both,
  samples=400,
  ylabel style={rotate=-90}, 
]
\pgfmathsetmacro{\rho}{0.15}
\pgfmathsetmacro{\Urho}{0.8}
\pgfmathsetmacro{\gammaS}{0.3}
\pgfmathsetmacro{\Ugamma}{0.8}
\pgfmathdeclarefunction{abound}{1}{\pgfmathparse{(\rho*(1-#1*\Ugamma) + #1*\gammaS*(\Urho-1)) / (#1*(\rho*(1-\Ugamma) + \gammaS*(\Urho-#1)))}}

\draw[gray, dashed, thin] (axis cs:0,0.5) -- (axis cs:1,0.5);

\path[fill=red, fill opacity=0.25]
  (axis cs:0,0) -- (axis cs:1,0) -- (axis cs:1,0.15)
  plot[variable=\b, domain=0.15:0.85] ({abound(\b)},{\b})
  -- (axis cs:0,0) -- cycle;

\addplot[red, very thick, domain=0.15:0.85, variable=\b] ({abound(\b)},{\b});
\end{axis}
\end{tikzpicture}
\hfill
\begin{tikzpicture}
\begin{axis}[
  title={(0.15, 0.4, 0.3, 0.8)},
  xlabel={$\alpha$},
  ylabel={$\beta$},
  xmin=0, xmax=1,
  ymin=0, ymax=1,
  xtick={0, 1},
  ytick={0, 0.5, 1},
  axis lines=left,
  width=5cm, height=4.5cm,
  grid=both,
  samples=400,
  ylabel style={rotate=-90}, 
]
\draw[gray, dashed, thin] (axis cs:0,0.5) -- (axis cs:1,0.5);

\path[fill=green!60, fill opacity=0.3]
  (axis cs:0,0) -- (axis cs:1,0) -- (axis cs:1,0.5) -- (axis cs:0,0.5) -- cycle;

\addplot[green!60!black, very thick] coordinates {(0,0.5) (1,0.5)};
\end{axis}
\end{tikzpicture}
\hfill
\begin{tikzpicture}
\begin{axis}[
  title={(0.15, 0.15, 0.3, 0.8)},
  xlabel={$\alpha$},
  ylabel={$\beta$},
  xmin=0, xmax=1,
  ymin=0, ymax=1,
  xtick={0, 1},
  ytick={0, 0.5, 1},
  axis lines=left,
  width=5cm, height=4.5cm,
  grid=both,
  samples=400,
  ylabel style={rotate=-90}, 
]
\pgfmathsetmacro{\rho}{0.15}
\pgfmathsetmacro{\Urho}{0.15}
\pgfmathsetmacro{\gammaS}{0.3}
\pgfmathsetmacro{\Ugamma}{0.8}

\pgfmathdeclarefunction{ybound}{1}{%
  \pgfmathparse{(%
    -(#1*(\rho*(1-\Ugamma)+\gammaS*\Urho) + \rho*\Ugamma - \gammaS*(\Urho-1))%
    + sqrt(max(0, (#1*(\rho*(1-\Ugamma)+\gammaS*\Urho) + \rho*\Ugamma - \gammaS*(\Urho-1))^2 - 4*(-#1*\gammaS)*(-\rho)))%
  ) / (2*(-#1*\gammaS))}%
}

\draw[gray, dashed, thin] (axis cs:0,0.5) -- (axis cs:1,0.5);

\path[fill=blue!60, fill opacity=0.3]
  (axis cs:0,0) -- (axis cs:0,0.4) 
  plot[variable=\x, domain=0.001:1] ({\x},{ybound(\x)})
  -- (axis cs:1,0) -- (axis cs:0,0) -- cycle;

\addplot[blue!60!black, thick, domain=0.001:1, variable=\x] ({\x}, {ybound(\x)});
\draw[blue!60!black, thick] (axis cs:0,0) -- (axis cs:0,0.4);

\end{axis}
\end{tikzpicture}

\caption{Comparison of AII for different $(\rho, \Urho, \gammaS, \Ugamma)$ with a reference line at $\beta=0.5$. Left: (0.15, 0.8, 0.3, 0.8); Middle: (0.15, 0.4, 0.3, 0.8); Right: (0.15, 0.15, 0.3, 0.8).}
\label{fig:AIIExample}
\end{figure}

Finally, we provide a sketch of the proof. The argument relies on the observation that one can construct the underlying cumulative induced choice rule $\Udelta$ using the observable objects $(\rho,\gamma)$ together with the parameters $\alpha$ and $\beta$. By \Cref{rmk:any_choice_go}, as long as we can ensure that the induced choice rule is a well-defined cumulative distribution, it can be supported by some underlying attention rule $\delta$. The requirement that the cumulative distribution be increasing can be expressed as a system of inequalities involving $\rho$, $\Urho$, $\gamma_{\succ}$, and $\Ugamma$. Solving these inequalities and rearranging terms yields the Attentional Interference Inequality (AII).




\subsection{Identifying Permeation Under Special Cases}\label{Sect:3Cases}

In this section, we focus on identifying the key parameter of interest in our setting—the permeation parameter. Permeation captures whether external stimuli succeed in attracting attention, which we view as a first-order metric of interest to policymakers and practitioners alike. Whether an advertisement or promotional campaign is effective ultimately hinges on its ability to induce consideration of the options it seeks to promote. To this end, we study three benchmark models. In what follows, we allow complete freedom in the value of the permeation parameter $\beta$ and concentrate on bounding its admissible range. Accordingly, we impose restrictions on the displacement parameter $\alpha$. We consider three cases: \emph{Perfect Displacement}, where $\alpha=1$; \emph{Perfect Retention}, where $\alpha=0$; and \emph{Unified Interference}, where $\alpha=\beta$.


We now formally define the three benchmark models.

\begin{definition}[Three Special Cases] \label{Definition: AIM}
A pair $(\rho,\gamma)$ is  represented by  Attentional Interference model with Perfect Displacement (AIM-D)   if it has a $(\succ,\mu, 1,\beta,\delta)$ AIM representation. A  $(\rho,\gamma)$ is  represented by  Attentional Interference model with Perfect Retention (AIM-R)   if it has a $(\succ,\mu, 0,\beta,\delta)$ AIM representation. A  $(\rho,\gamma)$ is  represented by  Unified Attentional Interference model (UAIM)  if it has a $(\succ,\mu, \beta,\beta,\delta)$ AIM representation. 
\end{definition}

Again, by \Cref{Coro:AnyRhoAIM}, we know that any choice rule $\rho$ admits all three representations. Nevertheless, our interest lies in identifying the set of parameter values that are admissible under each special case. For notational simplicity, we omit explicit definitions of the corresponding revealed ranges. However, each analysis effectively characterizes the full revealed range of the permeation parameter $\beta$, as the results are all derived from \Cref{Thm:RevealAttInt}.\footnote{Effectively, for $i=D,R,U$, we have \[
 \{ \beta \in [0,1) :
(\rho,\gamma) \text{ admits an } \text{AIM-} i \text{ representation for} \succ \text{and some } \mu \text{ and } \delta \}=[0, \min_{(x,S)} \beta_\succ^i(x,S)  ]. \]}




\begin{prop}[Revealed Bound Under Special Cases]\label{prop:BoundSpecial}
Let $(\rho,\gamma)$ admit an $(\mu,\succ,\alpha,\beta,\delta)$ AIM representation. Then, 
\begin{enumerate}
\item[(i)] \emph{(Perfect Displacement)} if $\alpha=1$, for all   $x\in S\in\mathcal{D}$
\begin{equation*}\tag{AIM-D-UB}
\beta \leq \frac{\rho(x|S)}{\gammaS(x|S)} =:\beta^{D}_\succ(x,S)  
\end{equation*}

\item[(ii)] \emph{(Perfect Retention)} if $\alpha=0$, for all   $x\in S\in\mathcal{D}$
\begin{equation*}\tag{AIM-R-UB}
\beta \leq \frac{\rho(x|S)}{\gammaS(x|S)(1-\Urho(x|S))+\rho(x|S) \Ugamma(x|S)}=:\beta^{R}_\succ(x,S)
\end{equation*}

\item[(iii)] \emph{(Unified Interference)} if $\alpha=\beta$, for all   $x\in S\in\mathcal{D}$
\begin{equation*}\tag{UAIM-UB}
\beta \leq  \frac{\Phi(x,S)+\sqrt{\Phi(x,S)^2+4\rho(x|S)\gammaS(x|S)}}{2\gammaS(x|S)} =:\beta^{U}_\succ(x,S)
\end{equation*}
where $\Phi(x,S):= \rho(x|S)(1-\Ugamma(x|S))-\gammaS(x|S)(1-\Urho(x|S))$.
\end{enumerate}
\end{prop}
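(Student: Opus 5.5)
The plan is to derive all three bounds directly from \Cref{Thm:RevealAttInt}. Since $(\rho,\gamma)$ admits an $(\mu,\succ,\alpha,\beta,\delta)$ AIM representation, we have $(\alpha,\beta)\in\BI_\succ(\rho,\gamma)$, so (AII) holds at every $(x,S)$ with $x\in S\in\mathcal{D}$. I would substitute the prescribed value of $\alpha$ into (AII) pointwise and solve the resulting inequality for $\beta$, using $\beta<1$ (recall $\mathcal{A}=[0,1]\times[0,1)$). Throughout I suppress the argument $(x|S)$, and I write $\rho,\Urho,\gammaS,\Ugamma$ for the four pointwise values.

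\emph{Case $\alpha=0$.} The left side of (AII) vanishes, leaving $0\le \rho(1-\beta\Ugamma)+\beta\gammaS(\Urho-1)$. This is linear in $\beta$ and rearranges to $\beta\bigl(\gammaS(1-\Urho)+\rho\Ugamma\bigr)\le\rho$. That is exactly (AIM-R-UB), under the convention $a/0=\infty$ when the coefficient vanishes.

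\emph{Case $\alpha=1$.} Expanding both sides, the terms $-\beta\rho\Ugamma$ and $\beta\gammaS\Urho$ appear on each side and cancel. What remains is $\beta\rho-\beta^2\gammaS\le\rho-\beta\gammaS$, which factors as $0\le(1-\beta)(\rho-\beta\gammaS)$. Dividing by $1-\beta>0$ gives $\beta\le\rho/\gammaS$, which is (AIM-D-UB). This mirrors the $|S|=2$ computation after \Cref{Thm:RevealAttInt}, but it now holds for every $x$, because the cancellation does not depend on the position of $x$ in $S$.

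\emph{Case $\alpha=\beta$.} This is the main obstacle, because (AII) becomes cubic in $\beta$. Moving everything to one side, the condition reads
\[
P(\beta):=\rho-\beta\rho\Ugamma+\beta\gammaS(\Urho-1)-\beta^2\bigl(\rho(1-\Ugamma)+\gammaS\Urho\bigr)+\gammaS\beta^3\ \ge 0 .
\]
The key observation is that $P(1)=0$, which one checks by direct substitution. So I would factor out $(1-\beta)$ and match coefficients, which should give
\[
P(\beta)=(1-\beta)\bigl(\rho+\Phi\beta-\gammaS\beta^2\bigr),
\]
with $\Phi=\rho(1-\Ugamma)-\gammaS(1-\Urho)$ as in the statement. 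The $\beta^2$ coefficient serves as the consistency check: $-\Phi-\gammaS$ must equal $-\rho(1-\Ugamma)-\gammaS\Urho$, and it does.

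Since $\beta<1$, the condition reduces to $-\gammaS\beta^2+\Phi\beta+\rho\ge0$. When $\gammaS>0$, this concave quadratic takes the value $\rho\ge0$ at $\beta=0$, and its roots have product $-\rho/\gammaS\le0$. Hence the nonnegative $\beta$ satisfying it are exactly those at most the larger root, $\bigl(\Phi+\sqrt{\Phi^2+4\rho\gammaS}\bigr)/(2\gammaS)$, which gives (UAIM-UB). When $\gammaS=0$ the bound is $\infty$ by convention, so no restriction arises.
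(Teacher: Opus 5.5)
Your proposal is correct and follows essentially the same route as the paper's proof. Both substitute $\alpha\in\{1,0,\beta\}$ into (AII), which holds pointwise by \Cref{Thm:RevealAttInt}, divide out $1-\beta>0$, and in the unified case reduce to $\gammaS\beta^2-\Phi\beta-\rho\le 0$ and keep the larger root. The differences are minor: you extract the factor $(1-\beta)$ from $P(1)=0$ rather than by grouping terms, and you handle the degenerate cases $\gammaS=0$ or $\rho=0$ more carefully than the paper, which asserts $\sqrt{\Phi^2+4\rho\gammaS}>|\Phi|$ strictly even though equality holds when $\rho\gammaS=0$.
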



We first consider the case of Unified Attentional Interference. In this case, the bound involves a square root. Since the Attentional Interference Inequality is a polynomial inequality of degree three, imposing $\alpha=\beta$ yields a cubic inequality in $\beta$. Nevertheless, two roots can be ruled out, leaving a unique admissible root, as shown in part (iii). To visualize the bound, one may draw a $45^\circ$ line in the plots in \Cref{fig:AIIExample}; the intersection of this line with the relevant AII boundary identifies the solution. 

Then, we compare the cases of Perfect Displacement and Perfect Retention. Observe that there is no fixed ordering between the bounds for a given $(x,S)$ across these two cases. In particular, one can show that the bound under AIM-R exceeds that under AIM-D if and only if
\begin{align*}
  \beta^{D}_\succ \leq \beta^{R}_\succ\iff  \frac{\rho}{\gammaS} \leq \frac{\rho}{\gammaS(1-\Urho)+\rho \Ugamma}
    \iff
    \frac{\Ugamma}{\gammaS} \leq \frac{\Urho}{\rho}.
\end{align*}
This condition can be verified using the examples in \Cref{fig:AIIExample}. In the cases $(\rho,\Urho,\gammaS,\Ugamma)=(0.15,0.8,0.3,0.8)$ at a given $(x,S)$, we have $\frac{\Ugamma}{\gammaS} < \frac{\Urho}{\rho}$, and in the case of  $(0.15,0.15,0.3,0.8)$, we have $\frac{\Ugamma}{\gammaS} > \frac{\Urho}{\rho}$. These relationships correspond to the two edge points observed in the plots. However, across the entire domain, we will see that the inequality holds only on one side. To demonstrate this, we first introduce a general result regarding the revealed set.

\begin{thm}[Monotonicity of Displacement]\label{Thm:Monotonicity_alpha} For any dataset $\mathcal{D}$,     $(\alpha,\beta) \in \BI_\succ(\rho,\gamma)$ implies  $ (\alpha',\beta) \in \BI_\succ(\rho,\gamma)$ for $\alpha' < \alpha$.
\end{thm}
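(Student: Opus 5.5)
By \Cref{Thm:RevealAttInt}, $(\alpha,\beta)\in\BI_\succ(\rho,\gamma)$ holds exactly when (AII) holds at every $(x,S)$ with $x\in S\in\mathcal{D}$. So I would prove the statement pointwise: fix $\beta\in[0,1)$ and a pair $(x,S)$, and show that the set of $\alpha\in[0,1]$ satisfying (AII) at $(x,S)$ is a down-interval $[0,\bar\alpha]$. Intersecting over $(x,S)$ then gives the result. For $\beta=0$ there is nothing to prove, by \Cref{Prop: Close interval of beta}, so take $\beta>0$.

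Write (AII) as $\alpha\beta K\le R$, with
\[
K:=\rho(1-\Ugamma)+\gammaS(\Urho-\beta),\qquad R:=\rho(1-\beta\Ugamma)+\beta\gammaS(\Urho-1).
\]
The left-hand side is linear in $\alpha$ and the right-hand side does not depend on $\alpha$.

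\textbf{Case $K\ge 0$.} The left-hand side is nondecreasing in $\alpha$, so validity at $\alpha$ gives validity at every $\alpha'<\alpha$ immediately.

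\textbf{Case $K<0$.} The left-hand side is decreasing in $\alpha$. By linearity, it suffices to check the endpoint $\alpha'=0$, that is, to show $R\ge 0$. I would first record the identity
\[
R-\beta K=(1-\beta)(\rho-\beta\gammaS),
\]
which is the $\alpha=1$ instance already underlying (AIM-D-UB). My attempt would then be to derive $R\ge0$ from three ingredients: the hypothesis that (AII) holds at some $\alpha$, the condition $K<0$ (i.e.\ $\beta>\Urho+\rho(1-\Ugamma)/\gammaS$), and the structural constraints $\rho\le\Urho$, $\gammaS\le\Ugamma$, all quantities lying in $[0,1]$. Note that $R\ge0$ is exactly $\beta\le\beta^R_\succ(x,S)$. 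The two conditions $K<0$ and $\beta\le\rho/\gammaS$ together force $\Urho\,\gammaS<\rho\,\Ugamma$, which is precisely the regime $\beta^R_\succ<\beta^D_\succ$. This is the delicate region.

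I expect this to be the main obstacle. A purely single-$(x,S)$ algebraic argument may not close, because (AII) at one pair may not by itself imply $R\ge0$. If it fails, I would fall back on the constructive content of the proof of \Cref{Thm:RevealAttInt}. There, given $(\rho,\gamma,\alpha,\beta)$, one solves explicitly for the implied cumulative internal rule $\Udelta_{\alpha}(x|S)$ as a function of $\rho,\Urho,\gammaS,\Ugamma$. Admissibility then means that $\Udelta_\alpha$ is a valid (monotone, $[0,1]$-valued) cumulative distribution along the $\succ$-order on each $S$. The plan would be to show that the family $\Udelta_{\alpha'}$ depends monotonically or convexly on $\alpha'$, with $\Udelta_0$ always admissible whenever some $\Udelta_\alpha$ is. Concretely, I would try to exhibit $\Udelta_{\alpha'}$ as a monotonicity-preserving transformation of $\Udelta_\alpha$. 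The intuition is that lowering displacement only adds the retained internal set $T$ back into consideration, which can be absorbed by reweighting $\delta$. I would verify monotonicity between consecutive alternatives $x$ and its $\succ$-successor, using the boundary values at $w_S$ and $b_S$ computed in the $|S|=2$ discussion as anchors. \Cref{rmk:any_choice_go} then supplies a $\delta$ supporting the transformed cumulative rule, and hence an AIM representation with $(\alpha',\beta)$.
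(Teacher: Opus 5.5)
\textbf{There is a genuine gap, and it starts with your primary route.} The set of $\alpha$ satisfying (AII) at a single $(x,S)$ is \emph{not} a down-interval. So the ``delicate region'' is not merely hard; the claim you want there is false.

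Take the values from the right panel of \Cref{fig:AIIExample}, $(\rho,\Urho,\gammaS,\Ugamma)=(0.15,0.15,0.3,0.8)$, with $\beta=0.45\le\rho/\gammaS$. Then $K=-0.06$ and $R=-0.01875$. Hence (AII) holds at $\alpha=1$, since $\beta K=-0.027\le R$, but fails at $\alpha=0$. At this pair the admissible $\alpha$'s in fact form an up-interval.

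Every ingredient you list holds in this example: (AII) at some $\alpha$, $K<0$, $\rho\le\Urho$, $\gammaS\le\Ugamma$. So those ingredients cannot yield $R\ge 0$. The paper says as much: the permeation bounds are not ordered pointwise. The theorem is true only because of information from the other alternatives in $S$.

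In the example, the implied internal cumulative from the proof of \Cref{Thm:RevealAttInt} at $\alpha=1$ is $\Ldelta(x|S)=(1-\Urho-\beta(1-\Ugamma))/(1-\beta)\approx 1.38>1$. Hence (AII) must fail at some alternative of $S$ ranked above $x$. The point is that (AII) at $(x,S)$ only says the implied cumulative does not decrease from $x$ to its successor. It never says the cumulative stays below $1$.

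\textbf{Your fallback is the paper's route, but as written it omits the step that makes it work.} Start from a representing $\delta$ rather than from (AII). Equate the two multiplicative-separation decompositions (the lemma in the proof of \Cref{Thm:RevealAttInt}) of the same observed cumulative under $\alpha$ and under $\alpha'$, and solve:
\[
\Ldelta'=\frac{(1-\beta)\Ldelta+\bigl[(1-\alpha)\Ldelta+(\alpha-\alpha')\bigr]\beta\Lgamma}{1-\beta+(1-\alpha')\beta\Lgamma}.
\]
This is a weighted average of $\Ldelta$ and $\bigl((1-\alpha)\Ldelta+\alpha-\alpha'\bigr)/(1-\alpha')$, with weights $1-\beta$ and $(1-\alpha')\beta\Lgamma$.

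\begin{itemize}
\item It is increasing in $\Ldelta$.
\item It is increasing in $\Lgamma$ because the second value dominates the first if and only if $\Ldelta\le 1$. That is exactly the global fact the pointwise approach discards.
\item Since $\Ldelta$ and $\Lgamma$ are both monotone along $\succ$, so is $\Ldelta'$.
\item Nonnegativity uses $\alpha'<\alpha$, and the upper bound $1$ is preserved.
\end{itemize}
\Cref{rmk:any_choice_go} then supplies $\delta'$.

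This map handles every $\alpha'<\alpha$ at once. By contrast, your plan of showing admissibility only at $\alpha'=0$ would need a separate convexity argument for intermediate values. Also, the anchors you propose (WB-w and WB-b) do not deliver the bound $\Ldelta\le 1$ at interior alternatives, and that bound is the missing idea.
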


This result indicates that whenever the choice data can be explained by the model with a specific $(\alpha, \beta)$, it can also be explained using a smaller $\alpha$. Consequently, a smaller $\alpha$ is less conservative when bounding the effect of permeation. Intuitively, a smaller $\alpha$ provides the model more flexibility to explain behavior through internally generated attention, as these internal considerations are not displaced. In the spirit of \Cref{rmk:any_choice_go}, this internally generated consideration allows for significant freedom in explaining choice data. This flexibility, in turn, allows for a greater claimed effect of permeation. Notably, this result immediately suggests that the bound for permeation under Perfect Retention is always greater than or equal to that under Perfect Displacement, contrary to the pointwise intuition suggested by \Cref{fig:AIIExample}. We state this result in the following result.\footnote{While this result is derived as a corollary here, we also provide a direct proof of this result in the appendix.}


\begin{prop}[Ordering Permeation Bound] 
\label{coro:OrderingBound} For any dataset $\mathcal{D}$,
     $$\min_{(x, S)}\beta^{D}_\succ(x,S)\leq\min_{(x, S)}\beta^{U}_\succ(x,S) 
     \leq \min_{(x, S)}\beta^{R}_\succ(x,S)$$
\end{prop}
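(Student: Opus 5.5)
The plan is to derive the ordering from \Cref{Thm:Monotonicity_alpha}, using the characterization of the three revealed $\beta$-ranges implicit in \Cref{prop:BoundSpecial} and its footnote. Fix $\succ$ and write
\[
B^i:=\{\beta\in[0,1): (\rho,\gamma)\text{ admits an AIM-}i\text{ representation with }\succ\},\qquad i=D,R,U,
\]
so that $B^D=\{\beta:(1,\beta)\in\BI_\succ(\rho,\gamma)\}$, $B^R=\{\beta:(0,\beta)\in\BI_\succ(\rho,\gamma)\}$ and $B^U=\{\beta:(\beta,\beta)\in\BI_\succ(\rho,\gamma)\}$.

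The first step is to confirm that $B^i=[0,\bar\beta^i]\cap[0,1)$ with $\bar\beta^i:=\min_{(x,S)}\beta^i_\succ(x,S)$. By \Cref{Thm:RevealAttInt}, $B^i$ is the set of $\beta$ for which (AII) holds at every $(x,S)$ after substituting $\alpha=1$, $\alpha=0$ or $\alpha=\beta$. I will check that, for each fixed $(x,S)$, the resulting inequality in $\beta$ is equivalent on $[0,1)$ to $\beta\le\beta^i_\succ(x,S)$.
\begin{itemize}
\item For $\alpha=1$, (AII) simplifies after cancellation to $(1-\beta)(\rho-\beta\gammaS)\ge 0$, which is $\beta\le\beta^D_\succ(x,S)$.
\item For $\alpha=0$, (AII) is linear in $\beta$ and gives $\beta\le\beta^R_\succ(x,S)$ directly.
\item For $\alpha=\beta$, (AII) is a cubic in $\beta$. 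Here I would factor out $(1-\beta)$ (or otherwise exclude the spurious roots) to reach a quadratic with negative leading coefficient $-\gammaS$ and a nonnegative root $\beta^U_\succ(x,S)$. Then on $[0,1)$ the inequality holds exactly for $\beta\le\beta^U_\succ(x,S)$.
\end{itemize}
Intersecting over $(x,S)$ gives $B^i=[0,\bar\beta^i]\cap[0,1)$. This is what \Cref{prop:BoundSpecial} together with the footnote asserts, so I would invoke it, redoing only the sign analysis of the cubic if needed.

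The second step applies \Cref{Thm:Monotonicity_alpha}. Take $\beta\in B^D$, so $(1,\beta)\in\BI_\succ(\rho,\gamma)$. Since $\beta<1$, monotonicity with $\alpha'=\beta$ gives $(\beta,\beta)\in\BI_\succ(\rho,\gamma)$, so $\beta\in B^U$. Take $\beta\in B^U$. If $\beta>0$, monotonicity with $\alpha'=0$ gives $(0,\beta)\in\BI_\succ(\rho,\gamma)$. If $\beta=0$, \Cref{Prop: Close interval of beta} gives the same conclusion. Either way $\beta\in B^R$. Hence $B^D\subseteq B^U\subseteq B^R$.

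The third step translates the set inclusions into the inequalities on the minima. Suppose, for contradiction, that $\bar\beta^U<\bar\beta^D$. Choose $\beta\in(\bar\beta^U,\min\{\bar\beta^D,1\})$; this interval is nonempty because $\bar\beta^U\ge0$ makes the case $\bar\beta^U\ge1$ incompatible with the right endpoint exceeding it only if $\bar\beta^D>1$, and then both minima are at least $1$. Then $\beta\in B^D\setminus B^U$, contradicting $B^D\subseteq B^U$. The same argument yields $\bar\beta^U\le\bar\beta^R$.

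One degenerate case remains: two minima both at least $1$. Then both sets equal $[0,1)$, and the comparison must instead use the explicit formulas, or rest on the convention that the bounds are understood as capped at $1$. I would flag this explicitly, since the paper excludes the knife-edge case $\rho=\gamma_\succ$. The main obstacle is the first step for the unified case: showing that the cubic obtained from (AII) at $\alpha=\beta$ is equivalent on $[0,1)$ to the single threshold $\beta\le\beta^U_\succ(x,S)$, with the other two roots ruled out, so that $B^U$ is indeed an interval starting at $0$.
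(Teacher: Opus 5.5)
Your proposal is correct and is essentially the paper's proof. Both read each $\min_{(x,S)}\beta^{i}_\succ(x,S)$ as the right endpoint of the $\beta$-section of $\BI_\succ(\rho,\gamma)$ along $\alpha=1$, $\alpha=\beta$ and $\alpha=0$, and both move points between these sections with \Cref{Thm:Monotonicity_alpha}; your use of $\min\{\bar\beta^i,1\}$ is in fact more careful than the paper's ``$\min=\max$'' step, since $\bar\beta^R=1$ (a supremum not attained in $[0,1)$) can occur. The degenerate case you leave open never arises: $\bar\beta^D\le 1$ because $\sum_x\rho(x|S)=\sum_x\gammaS(x|S)=1$ on every menu, while $\bar\beta^U>1$ would put $(\beta,\beta)$ in $\BI_\succ(\rho,\gamma)$ for all $\beta<1$, and letting $\beta\to1$ in the interference formula forces $\rho=\gamma_\succ$, where every $\beta^U_\succ(x,S)$ with $\rho(x|S)>0$ equals $1$, a contradiction.
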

\begin{proof}
    Observe that 
    \begin{align*}
    \min_{(x, S)}\beta^{D}_\succ(x,S)&=\max\{\beta:(1,\beta)\in\BI_\succ(\rho,\gamma)\};\\
    \min_{(x, S)}\beta^{R}_\succ(x,S)&=\max\{\beta:(0,\beta)\in\BI_\succ(\rho,\gamma)\};\\
    \min_{(x, S)}\beta^{U}_\succ(x,S)&=\max\{\beta:(\beta,\beta)\in\BI_\succ(\rho,\gamma)\}.
    \end{align*}
    By \Cref{Thm:Monotonicity_alpha}, if $(\beta,\beta)\in\BI_\succ(\rho,\gamma)$ then  $(0,\beta)\in\BI_\succ(\rho,\gamma)$. So $\min_{(x, S)}\beta^{U}_\succ(x,S)\in\{\beta:(0,\beta)\in\BI_\succ(\rho,\gamma)\}$. Hence, $\min_{(x,S)}\beta^{U}_\succ(x,S)\leq\min_{(x,S)}\beta^{R}_\succ(x,S)$. 
    
    Similarly, if $(1,\beta)\in\BI_\succ(\rho,\gamma)$ then  $(\beta,\beta)\in\BI_\succ(\rho,\gamma)$ by \Cref{Thm:Monotonicity_alpha}. So $\min_{(x, S)}\beta^{D}_\succ(x,S)\in\{\beta:(\beta,\beta)\in\BI_\succ(\rho,\gamma)\}$. Hence, $\min_{(x,S)}\beta^{D}_\succ(x,S)\leq\min_{(x,S)}\beta^{U}_\succ(x,S)$.
\end{proof}

This result also suggests that the allowable value for permeation under Perfect Displacement is more ``conservative'' than under Perfect Retention. In this corollary, it also shows that the case of unified interference sits between the two extreme cases. In the next section, we use Perfect Displacement as an illustration for incorporating additional assumptions regarding internal attention. As a final mark, \Cref{Thm:Monotonicity_alpha} also indicates that the global maximum permeation parameter is given by the case where $\alpha=0$, which we state in the following corollary.

\begin{coro}[Global Maximum] 
\label{coro:max_beta}The global maximal $\beta$ is identified at $\alpha=0$ as
$$\max\{\beta:(\alpha,\beta)\in\BI(\rho,\gamma)\ \text{for some}\ \alpha\}=\max_\succ\min_{(x, S)}\beta^{R}_\succ(x,S)$$
\end{coro}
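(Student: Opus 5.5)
The plan is to reduce the statement to the preference-specific characterization and then use \Cref{Thm:Monotonicity_alpha}. Since $\BI(\rho,\gamma)=\bigcup_{\succ}\BI_\succ(\rho,\gamma)$, we have
\[
\max\{\beta:(\alpha,\beta)\in\BI(\rho,\gamma)\ \text{for some}\ \alpha\}=\max_\succ \max\{\beta:(\alpha,\beta)\in\BI_\succ(\rho,\gamma)\ \text{for some}\ \alpha\}.
\]
The outer maximum is over finitely many linear orders, so it suffices to show, for each fixed $\succ$, that the inner maximum equals $\min_{(x,S)}\beta^{R}_\succ(x,S)$.

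Fix $\succ$. First I will show that the inner maximum is attained at $\alpha=0$. Suppose $(\alpha,\beta)\in\BI_\succ(\rho,\gamma)$ for some $\alpha\in[0,1]$. If $\alpha>0$, \Cref{Thm:Monotonicity_alpha} with $\alpha'=0<\alpha$ gives $(0,\beta)\in\BI_\succ(\rho,\gamma)$; if $\alpha=0$ there is nothing to prove. Conversely, any $(0,\beta)$ in the set is of the form $(\alpha,\beta)$. Hence
\[
\{\beta:\exists\alpha,\ (\alpha,\beta)\in\BI_\succ(\rho,\gamma)\}=\{\beta:(0,\beta)\in\BI_\succ(\rho,\gamma)\}.
\]

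Second, I will identify the right-hand set. By the footnote following the definition of the three special cases, which rests on \Cref{Thm:RevealAttInt} and part (ii) of \Cref{prop:BoundSpecial}, this set is the interval $[0,\min_{(x,S)}\beta^R_\succ(x,S)]$. Concretely, setting $\alpha=0$ in (AII) gives the linear inequality
\[
0\le \rho(1-\beta\Ugamma)+\beta\gammaS(\Urho-1),
\qquad\text{i.e.}\qquad
\beta\bigl(\gammaS(1-\Urho)+\rho\Ugamma\bigr)\le\rho
\]
pointwise. Pointwise this is exactly $\beta\le\beta^R_\succ(x,S)$, so intersecting over all $(x,S)$ yields $\beta\le\min_{(x,S)}\beta^R_\succ(x,S)$. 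The maximum over this interval is its right endpoint, which is the claimed value.

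The main obstacle is the boundary restriction $\mathcal{A}=[0,1]\times[0,1)$. If $\min_{(x,S)}\beta^R_\succ\ge 1$, the maximum need not be attained inside $[0,1)$. I would handle this through the standing assumption $\rho\neq\gamma_\succ$ and the convention $a/0=\infty$: either argue that the bound is then strictly below $1$ for some $(x,S)$, or read ``max'' as a supremum truncated at $1$, consistent with the footnote's interval statement.
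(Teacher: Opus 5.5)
Your argument is correct and follows the paper's own route. \Cref{Thm:Monotonicity_alpha} with $\alpha'=0$ reduces everything to $\alpha=0$, where (AII) becomes $\beta\le\beta^R_\succ(x,S)$ pointwise, and you then take the union over $\succ$; the paper states the corollary without further argument and relies on exactly these two facts.

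On the boundary point, only your second reading works. Since $\beta^R_\succ(w_S,S)\ge 1$ always and $\beta^R_\succ(b_S,S)=\rho(b_S|S)/\gammaS(b_S|S)$, a single binary menu with $\rho(b_S|S)>\gammaS(b_S|S)$ already gives $\min_{(x,S)}\beta^R_\succ(x,S)\ge 1$ even though $\rho\neq\gamma_\succ$. So the standing assumption cannot force the bound below $1$, and the identity must be read as a supremum capped at $1$. It is a genuine maximum exactly when the right-hand side is below $1$, which is also what the paper tacitly assumes.
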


\section{Non-parametric Restriction over Internal Attention} \label{Sect:NonPara}

In the previous section, we analyzed the general case in which the internal attention rule $\delta$ is allowed to be completely flexible. While this flexibility enables the model to accommodate any form of internal attention formation, it also prevents us from ruling out the possibility of zero influence: the permeation parameter $\beta$ can always be set equal to zero to rationalize the observed choice data together with external stimuli. Nevertheless, imposing additional structure on the internal attention rule allows the model to deliver sharper implications. In particular, such restrictions can generate lower bounds on the parameters and, at the same time, endow the model with falsifiability and revealed preference content.

In this section, we illustrate this idea by adopting a prominent assumption from the attention literature—\emph{monotonic attention}—introduced by \citet{Cattaneo2020AModel}. Monotonic attention requires that the probability of a given subset being considered cannot decrease when an alternative outside that subset becomes infeasible. Many commonly used attention rules satisfy this property, including the independent consideration rule of \citet{Manzini2014}, the logit attention rule of \citet{Brady2016Menu-DependentFeasibility}, and the random categorization model of \citet{Aguiar_2017_EL}. As such, monotonic attention constitutes a natural and empirically relevant restriction on the internal attention rule.

We now formalize this concept. An internal attention rule $\delta$ is said to be \emph{monotonic} if, for every $S \in \mathcal{D}$, every $A \subseteq S$, and every $x \in S \setminus A$, we have
\[
\delta(A|S \setminus \{x\}) \geq \delta(A|S).
\]

With this definition in place, we proceed to study our model of interest. In what follows, we focus on the case of Perfect Displacement. We adopt this version of the model for three reasons. First, it provides the strongest bound over permeation. Second, it is conservative in the sense that it only allows a policy maker to claim a weaker influence over permeation. Third, it admits a particularly transparent characterization. Nonetheless, the analysis can, in principle, be extended to other variants of the model.

\begin{definition}[AIM-DM] \label{Definition: AIM}
A pair $(\rho,\gamma)$ is  represented by  Attentional Interference model with Perfect Displacement and Monotonicity (AIM-DM)   if it has a $(\succ,\mu, \beta,\delta)$ AIM-D representation and $\delta$ is monotonic. In this case, we say that $(\rho,\gamma)$ admits a $(\succ,\mu,\beta,\delta)$ AIM-DM representation. 
\end{definition}

We then define our object of interest, which preserves the essence of the model-based inference approach introduced in \Cref{Def:RevealedAtteInf}. As before, establishing the non-emptiness of the identified range requires the existence of an underlying representation consistent with the observed data.

\begin{definition}[Revealed Attentional Interference under AIM-DM]
We define the identified range of attentional interference as
\begin{align*}
    \BB^{DM}(\rho,\gamma):=\{ \beta \in [0,1) : &(\rho,\gamma) \text{ admits an $(\succ,\mu,\beta,\delta)$ AIM-DM representation}\\ 
    &\text{for some }  \succ, \mu \text{ and }  \delta \}.
\end{align*}
\end{definition}

When the internal attention rule $\delta$ is required to be monotonic, certain attention rules can no longer be used to support a given value of $\beta$ in an AIM-DM representation. As a result, for fixed choice data $\rho$ and external stimuli $\gamma$, the admissible range of the permeation parameter $\beta$ becomes smaller than in the unrestricted case. As before, we focus on preference-specific identification and work with a fixed preference relation $\succ$. Accordingly, we define
\begin{align*}
    \BB^{DM}_\succ(\rho,\gamma):=\{ \beta \in [0,1) : &(\rho,\gamma) \text{ admits an $(\succ,\mu,\beta,\delta)$ AIM-DM representation}\\ 
    &\text{for some }  \mu \text{ and }  \delta \}.
\end{align*}

It follows immediately that
\[
\BB^{DM}(\rho,\gamma) = \bigcup_{\succ \in \mathcal{P}} \BB^{DM}_{\succ}(\rho,\gamma).
\]


\subsection{Identification and Characterization}

For notational convenience, we define, for any choice rule $\rho \in \mathcal{C}$, a \textit{differencing} operator $\Delta_x$ acting on $\rho(y|S)$, which captures the change in choice probability when an alternative is removed from the menu. Formally, for $x \in S$, define
\[
\Delta_x \rho(y|S) := \rho(y|S \setminus \{x\}) - \rho(y|S).
\]

In the literature, it is conventional to say that a choice rule $\rho$ is \emph{regular} if $\Delta_x \rho(y|S)$ is always nonnegative; that is, an alternative becomes (weakly) more likely to be chosen when another alternative is removed from the menu. Regularity is closely related to the characterization of monotonic attention.

\citet{Cattaneo2020AModel} demonstrate that the implications of monotonic attention can be expressed as restrictions on how the choice probability of an alternative $y$ may change when another alternative $x$ becomes infeasible. We now state a version of their characterization theorem for monotonic attention, adapted to our setting.

\begin{thm}[\citet{Cattaneo2020AModel}]\label{Thm:RAM} For $\mathcal{D}=2^X \setminus \emptyset $, there exists monotonic $\mu \in \mathcal{M}$ such that  $\mu_\succ=\rho$ if and only if   $\Delta_x \rho(y|S ) \geq 0$  whenever $x \succ y$.
\end{thm}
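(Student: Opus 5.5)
Necessity is a direct computation, and sufficiency is a construction that adds one alternative at a time, worst first. Throughout, write \(\mu(A|S)\) for attention and use the fact that under \(\mu_\succ\) the alternative \(y\) is chosen from \(A\) exactly when \(y\in A\) and no element of \(A\) beats \(y\).

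\textbf{Necessity.} Suppose \(\mu\) is monotonic, \(\mu_\succ=\rho\), and \(x\succ y\). Then
\[\rho(y|S)=\sum_{A\subseteq S:\, y=\max(\succ,A)}\mu(A|S).\]
Every such \(A\) excludes \(x\), because \(x\succ y\) and \(y\) is the best element of \(A\). The same sets \(A\subseteq S\setminus\{x\}\) are exactly those for which \(y\) is chosen from \(S\setminus\{x\}\). Monotonicity gives \(\mu(A|S\setminus\{x\})\ge\mu(A|S)\) for each of them. Summing these inequalities yields \(\Delta_x\rho(y|S)\ge0\).

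\textbf{Sufficiency.} Assume the regularity inequalities hold whenever \(x\succ y\). I would not build a general \(\mu\) directly. Instead I would use a triangular attention rule, the one from Cattaneo et al. Label the elements of \(S\) as \(a_1\succ a_2\succ\dots\succ a_n\). Set
\[\mu(\{a_k,\dots,a_n\}|S)=\rho(a_k|S),\]
so that attention sits only on lower contour sets and the best element of \(\{a_k,\dots,a_n\}\) is \(a_k\). This gives \(\mu_\succ=\rho\) immediately.

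What remains is monotonicity: \(\mu(A|S\setminus\{x\})\ge\mu(A|S)\) for \(x\notin A\).
\begin{itemize}
\item If \(A\) is not a lower contour set of \(S\), then \(\mu(A|S)=0\) and there is nothing to check.
\item If \(A=\{a_k,\dots,a_n\}\) and \(x\notin A\), then \(x=a_j\) with \(j<k\). So \(A\) is still a lower contour set of \(S\setminus\{x\}\) with best element \(a_k\). Hence \(\mu(A|S\setminus\{x\})=\rho(a_k|S\setminus\{x\})\), and this is at least \(\rho(a_k|S)\) by the hypothesis with \(x\succ a_k\).
\end{itemize}

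Because the full domain \(\mathcal{D}=2^X\setminus\emptyset\) is assumed, \(S\setminus\{x\}\) is always in the domain, so these comparisons are all available.

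The main obstacle is justifying the triangular rule. I need to check that it is a valid attention rule: it sums to one, it never puts mass on \(\emptyset\), and for a singleton \(S\) it puts full mass on \(S\). The monotonicity check must also cover the case where \(x\) would be removed from \(A\) itself. That case is excluded because the definition only requires \(x\in S\setminus A\). Once these points are settled, the argument is complete.
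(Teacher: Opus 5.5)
Your proof is correct, and it follows the route the paper takes. The paper gives only the necessity argument, and yours matches it: when $x\succ y$, the sets on which $y$ is chosen are the same for $S$ and $S\setminus\{x\}$, and monotonicity raises each of their weights. For sufficiency the paper defers to \citet{Cattaneo2020AModel}, and your triangular rule $\mu(\{a_k,\dots,a_n\}|S)=\rho(a_k|S)$ is the standard construction behind that result. The checks you list as the remaining obstacle are immediate: the weights are nonnegative, sum to $\sum_k\rho(a_k|S)=1$, sit on nonempty sets, and reproduce $\rho$ because $a_k$ is the best element of exactly one supported set. The only nontrivial monotonicity case reduces to the hypothesis with $x=a_j\succ a_k$.
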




In words, the behavioral implication of monotonic attention is the regularity of all inferior alternatives when a better alternative is removed. The intuition is as follows. When $x \succ y$, all subsets of $S$ in which $y$ is the most preferred option cannot contain $x$; that is, $A \in S(y,\succ)$ implies $x \notin A$. It follows that removing $x$ from $S$ neither eliminates nor creates such subsets, so that $S(y,\succ) = (S \setminus \{x\})(y,\succ)$. Monotonic attention then requires that all such subsets become weakly more likely to be considered when $x$ is removed from the menu. Consequently, we must have $\rho(y|S \setminus \{x\}) \geq \rho(y|S)$. \Cref{Thm:RAM} shows that regularity for inferior alternatives is not only necessary but also sufficient for a choice rule to admit a representation under a monotonic attention rule.

In our framework, we can leverage this result directly. Suppose that $\rho$ admits an AIM-DM representation. Then the induced choice rule generated by the internal attention rule $\delta$ can be written as, for $x\in S \in \mathcal{D}$
$$
\delta_{\succ}(x|S) = \frac{\rho(x|S) - \beta \gamma_{\succ}(x|S)}{1-\beta}.
$$
Since $\delta$ is assumed to be monotonic, it follows that $\delta_{\succ}$ must satisfy the lower-contour regularity condition stated in \Cref{Thm:RAM}. This requirement allows us to further restrict the admissible range of the permeation parameter $\beta$. Applying the differencing operator yields
\begin{equation*}\tag{$*$}
\frac{\Delta_x \rho(y|S) - \beta \Delta_x \gamma_{\succ}(y|S)}{1-\beta}.
\end{equation*}

The expression $(*)$ must be nonnegative whenever $x \succ y$. Several cases arise. We begin with the case in which $\Delta_x \gamma_{\succ}(y|S) = 0$. In this situation, monotonicity imposes no additional restriction on $\beta$, since the expression is nonnegative whenever $\Delta_x \rho(y|S) \geq 0$. Intuitively, when the external presentation probability of alternative $y$ is unaffected by the removal of $x$, external stimuli exert the same influence on attention toward $y$ across the two menus. As a result, the force of monotonic attention operates directly on the observed choice rule $\rho$, yielding the standard regularity condition for alternative $y$ in the original data.

In general, depending on the sign of $\Delta_x\gammaS(y|S)$, we will obtain an upper bound or a lower bound for the attention parameter $\beta$. Applying regularity over $(*)$ and rearrange the inequality, we get
\begin{align*}
     \beta 
    \begin{dcases}
       \leq  \frac{\Delta_x \rho(y|S)}{ \Delta_x \gammaS(y|S)}  & \text{ if }  \Delta_x \gammaS(y|S) \geq0 \\
        \geq  \frac{\Delta_x \rho(y|S)}{ \Delta_x \gammaS(y|S)}  & \text{ if } \Delta_x \gammaS(y|S) <0     
    \end{dcases}
\end{align*}
Specifically, when $\Delta_x \gammaS(y|S)>\Delta_x \rho(y|S)>0$, we will claim an upper bound in $(0,1)$ for $\beta$. Intuitively, when the external stimuli implies a larger increment in choice probability of $y$ once $x$ is removed, $\beta$ cannot be too large; otherwise the internal choice rule will exhibit a decrease in choice probability of $y$ to explain the observed choice probability. Yet, as $x\succ y$, such a decrease is not allowed under a monotonic internal attention rule. By the same reasoning, if $\Delta_x \gammaS(y|S)>0>\Delta_x \rho(y|S)$, then we will rule out all $\beta\in[0,1)$ and thus eliminate the possibility of the data $\rho$ being represented by the model under the preference $\succ$ because the internal choice rule must exhibit a decrease in choice probability of $y$.

Similarly, when $\Delta_x \gammaS(y|S)<\Delta_x \rho(y|S)<0$, we will claim a lower bound in $(0,1)$ for $\beta$. The idea is that, if the choice rule $\rho$ violates regularity with respect to an inferior alternative, it must be due to the external choice rule $\gammaS$ since the internal choice rule $P_\succ$ cannot exhibit such a violation. Hence, to rationalize the data with respect to our model, one must conclude a minimal impact of $\beta$.  Notice that this was not possible in the general framework, where one can only claim a maximum impact of external stimuli.

Therefore, in general, we obtain a bound—either an upper or a lower one—whenever the sign of $\Delta_x \gamma_{\succ}(y|S)\cdot \Delta_x \rho(y|S)$ is (weakly) positive. If the sign is negative, then either the model is rejected or the resulting bound is nonbinding. We summarize these observations in \Cref{tab:Revealing Bound}.

\begin{table}
    \centering
    \renewcommand{\arraystretch}{1.2} 
    \begin{tabular}{ccccc}
        \toprule
     &   &  \multicolumn{2}{c}{$\Delta_x \rho(y|S)$} \\
        \cmidrule(lr){3-4}
        & & $ \geq 0$ &  $<0$ \\
        \midrule
        \multirow{2}{*}{$\Delta_x \gammaS(y|S)$} 
            & $\geq 0$ & Upper bound &  Rejected  \\
            & $<0$ & Non-binding bound     & Lower bound \\
        \bottomrule
    \end{tabular}
    \caption{Revealing the Bound for Permeation Paramter}
    \label{tab:Revealing Bound}
\end{table}

To formally state our  result, we define the following two sets $U_\succ$ and $L_\succ$. 
\begin{align*}
 U_{\succ}   :=\{(x,y,S) \in X^2 \times \mathcal{D}: x,y \in S, x\succ y, S\setminus x \in \mathcal{D}\text{ and } \Delta_x \gammaS(y|S) \geq 0 \} \\
 L_{\succ}   :=\{(x,y,S) \in X^2 \times \mathcal{D}: x,y \in S, x\succ y, S\setminus x \in \mathcal{D} \text{ and } \Delta_x \gammaS(y|S)<0 \}  
\end{align*}
Thus, $U_\succ$ consists of all tuples $(x,y,S)$ from which we can pin down an upper bound for $\beta$; $L_\succ$ consists of all tuples $(x,y,S)$ from which we can pin down an lower bound for $\beta$. Therefore, we denote the following two sets of inequalities.

\begin{equation}\tag{AIM-DM-UB}
\beta \le  
\frac{\Delta_x\rho(y|S)}{\Delta_x\gammaS(y|S)} \ \ \text{for all } (x,y,S) \in U_\succ 
\end{equation}

\begin{equation}\tag{AIM-DM-LB}
\beta \ge 
\frac{\Delta_x\rho(y|S)}{\Delta_x\gammaS(y|S)} \ \  \text{for all \ } (x,y,S) \in L_\succ  
\end{equation}

For convenience, we define $\beta^{DM}_\succ(x,y,S):=\dfrac{\Delta_x\rho(y|S)}{\Delta_x\gammaS(y|S)}$. We are now ready to state our main result on identifying bounds for the attention parameter under the assumption of monotonic attention.

\begin{thm}[Revealed Bound of AIM-DM]\label{Thm:RevealedBound_AIM_DM}
Given a dataset $\mathcal{D}$,
$$\BB_\succ^{DM}(\rho,\gamma) \subseteq \{\beta \in[0,1): \beta \text{ satisfies (AIM-D-UB), (AIM-DM-UB) and (AIM-DM-LB)} \}.$$
Moreover, if $\mathcal{D} = 2^X \setminus \{\emptyset\}$, the set inclusion holds with equality.
\end{thm}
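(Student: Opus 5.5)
The proof has two directions, and both run through the induced internal choice rule. Fix $\succ$. Under perfect displacement ($\alpha=1$) the attention rule is $\mu=(1-\beta)\delta+\beta\gamma$. Because the induced choice rule is linear in the attention rule, $\rho=\mu_\succ=(1-\beta)\delta_\succ+\beta\gammaS$. Hence any AIM-DM representation with $\beta<1$ forces
\[
\delta_\succ(x|S)=\frac{\rho(x|S)-\beta\gammaS(x|S)}{1-\beta}.
\]

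\emph{Inclusion.} Take $\beta\in\BB^{DM}_\succ(\rho,\gamma)$.
\begin{enumerate}
\item An AIM-DM representation is in particular an AIM-D representation, so (AIM-D-UB) follows from \Cref{prop:BoundSpecial}(i). Equivalently, $\delta_\succ(x|S)\ge 0$.
\item Monotonicity of $\delta$ gives lower-contour regularity of $\delta_\succ$ on every pair $S$, $S\setminus\{x\}$ in $\mathcal{D}$. This is the necessity direction of \Cref{Thm:RAM}, and I would reprove it directly from the argument sketched after that theorem, since $\mathcal{D}$ need not be the full domain here. The argument: when $x\succ y$, $S(y,\succ)=(S\setminus\{x\})(y,\succ)$, and summing the monotonicity inequality over these sets gives $\Delta_x\delta_\succ(y|S)\ge 0$.
\item Substituting the formula above gives $\Delta_x\rho(y|S)-\beta\Delta_x\gammaS(y|S)\ge 0$. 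Dividing by $\Delta_x\gammaS(y|S)$ according to its sign yields (AIM-DM-UB) on $U_\succ$ and (AIM-DM-LB) on $L_\succ$. When $\Delta_x\gammaS=0$, the convention $a/0=\infty$ covers the case $\Delta_x\rho\ge0$. When $\Delta_x\rho<0$, the convention makes the bound $-\infty$, which is inconsistent with the requirement, and I will need to handle this rejection case carefully so that it matches the convention.
\end{enumerate}

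\emph{Reverse inclusion} for $\mathcal{D}=2^X\setminus\{\emptyset\}$. Take $\beta\in[0,1)$ satisfying the three families of inequalities, and define $\tau:=(\rho-\beta\gammaS)/(1-\beta)$.
\begin{enumerate}
\item $\tau$ is a choice rule. It sums to one on each $S$ because both $\rho$ and $\gammaS$ do. It is nonnegative by (AIM-D-UB), which can be rearranged as $\rho\ge\beta\gammaS$; this also covers $\gammaS=0$.
\item Reversing the algebra above, (AIM-DM-UB) and (AIM-DM-LB) give $\Delta_x\tau(y|S)\ge0$ whenever $x\succ y$. The pairs in $U_\succ\cup L_\succ$ exhaust all $x\succ y$ in $S$ with $S\setminus\{x\}\in\mathcal{D}$. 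The case $|S|=1$ is vacuous, since then $S\setminus\{x\}=\emptyset\notin\mathcal{D}$.
\item By the sufficiency part of \Cref{Thm:RAM}, whose full-domain hypothesis is exactly why equality is claimed only for $\mathcal{D}=2^X\setminus\{\emptyset\}$, there is a monotonic $\delta$ with $\delta_\succ=\tau$.
\item Set $\mu=(1-\beta)\delta+\beta\gamma$. Linearity gives $\mu_\succ=(1-\beta)\tau+\beta\gammaS=\rho$, so $(\succ,\mu,\beta,\delta)$ is an AIM-DM representation.
\end{enumerate}

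The main obstacle is bookkeeping rather than substance. First, the division by $\Delta_x\gammaS$ must handle signs, zeros and the $a/0$ convention consistently, so that the inequality families are exactly equivalent to $\Delta_x\tau\ge0$. Second, the necessity of regularity must be justified on a possibly partial domain, since \Cref{Thm:RAM} as stated only covers the full domain. I would settle both by proving the pointwise equivalence $\Delta_x\tau(y|S)\ge0\iff$ the corresponding bound, treating $\Delta_x\gammaS$ positive, zero and negative separately.
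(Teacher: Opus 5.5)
Your proposal is correct and follows the paper's proof essentially step for step: both directions go through the internal choice rule $(\rho-\beta\gammaS)/(1-\beta)$, with (AIM-D-UB) giving nonnegativity and \Cref{Thm:RAM} supplying the regularity condition and, on the full domain, the monotonic $\delta$. Your one refinement is to derive the necessity of lower-contour regularity directly from monotonicity instead of citing \Cref{Thm:RAM}, which is stated only for $\mathcal{D}=2^X\setminus\{\emptyset\}$; this tightens a step that the paper's inclusion argument glosses over for general $\mathcal{D}$.
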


\Cref{Thm:RevealedBound_AIM_DM} provides a complete characterization of the revealed range when 
$\mathcal{D}=2^X\setminus\{\emptyset\}$. The result can be expressed equivalently as the intersection of the interval generated by the monotonicity restrictions and the interval generated by the perfect-displacement upper bound:
\begin{align*}
    \BB_\succ^{DM}(\rho,\gamma) 
    &\subseteq  
    \left[
    \max_{(x,y,S)\in L_\succ} \beta^{DM}_\succ(x,y,S),
    \min_{(x,y,S)\in U_\succ} \beta^{DM}_\succ(x,y,S)
    \right] 
    \cap 
    \left[
    0,
    \min_{(x,S)} \beta^{D}_\succ(x,S)
    \right].
\end{align*}
Equivalently,
\begin{align*}
    \BB_\succ^{DM}(\rho,\gamma) 
    &\subseteq  
    \left[
    \max \left\{ \max_{(x,y,S)\in L_\succ} \beta^{DM}_\succ(x,y,S),0 \right\},
    \min \left\{
    \min_{(x,y,S)\in U_\succ} \beta^{DM}_\succ(x,y,S),
    \min_{(x,S)} \beta^{D}_\succ(x,S)
    \right\}
    \right].
\end{align*}
The inclusion becomes an equality when $\mathcal{D}=2^X\setminus\{\emptyset\}$. For limited data, a complete characterization can also be obtained; we provide this extension in \Cref{App:LimitedData}.

Finally, we turn to the preference implications of the model. An immediate implication of 
\Cref{Thm:RevealedBound_AIM_DM} is that, for a fixed preference relation $\succ$, the model admits a representation if and only if the corresponding upper- and lower-bound inequalities in $\beta$ have a nonempty intersection. Since multiple preference orderings may be consistent with the same choice data, we adopt a conservative notion of revealed preference, following \citet{Masatlioglu2012RevealedAttention}. We say that $x$ is revealed to be preferred to $y$ if every AIM-DM representation of $(\rho,\gamma)$ ranks $x$ above $y$. This definition can be equivalently stated in terms of the emptiness of the revealed interval for any preference relation that ranks $y$ above $x$, which we state in the following remark.

\begin{rmk}[Revealed Preference]\label{Coro:revealedPrefernece}
Let $(\rho,\gamma)$ admit an AIM-DM representation. Then $x$ is revealed to be preferred to $y$ if and only if, for every preference relation $\succ'$ such that $y\succ' x$, 
\[
B^{DM}_{\succ'}(\rho,\gamma)=\emptyset.
\]
\end{rmk}

Based on \Cref{tab:Revealing Bound} and our previous deviation, there are two ways in which a preference ordering can be rejected. One is the conflicting signs of $\Delta_x\gamma_\succ(y|S)$ and $\Delta_x \rho(y|S)$ given one tuple $(x,y,S)$, and the other one is an empty intersection between the upper bounds (AIM-DM-UB or AIM-D) and the lower bound (AIM-DM-LB). Therefore, one can examine how these three contribute to the revealed preference marked in \Cref{Coro:revealedPrefernece}.

\begin{prop}[Revealed Preference Decomposition]\label{Prop:Revealed Preference}
Let $(\rho,\gamma)$ admit an AIM-DM representation. Then, $x$ is revealed to be preferred to $y$ if, for every preference relation $\succ'$ such that $y\succ' x$, at least one of the following conditions holds:
\begin{enumerate}
    \item (\emph{Sign Conflict: DM}) There exists $(w,z,S)\in U_{\succ'}$ such that
    \[
        \Delta_w \rho(z|S)<0.
    \]
    \item (\emph{Bound Conflict: DM}) There exist $(w,z,S)\in L_{\succ'}$ and $(w',z',S')\in U_{\succ'}$ such that
    \[
        \beta^{DM}_{\succ'}(w,z,S) > \beta^{DM}_{\succ'}(w',z',S') \geq 0.
    \]
    \item (\emph{Bound Conflict: DM \& D}) There exist $(w,z,S)\in L_{\succ'}$ and $(w',S')$ such that
    \[
        \beta^{DM}_{\succ'}(w,z,S) > \beta^{D}_{\succ'}(w',S').
    \]
\end{enumerate}
where the converse is also true if $\mathcal{D} = 2^X \setminus \{\emptyset\}$.
\end{prop}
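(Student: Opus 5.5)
By \Cref{Coro:revealedPrefernece}, $x$ is revealed preferred to $y$ if and only if $\BB^{DM}_{\succ'}(\rho,\gamma)=\emptyset$ for every $\succ'$ with $y\succ' x$. So it suffices to fix such a $\succ'$ and show two things. First, each of the three conditions forces $\BB^{DM}_{\succ'}(\rho,\gamma)=\emptyset$ for an arbitrary dataset. Second, on the full domain, emptiness implies that one of the three conditions holds. Both directions go through \Cref{Thm:RevealedBound_AIM_DM}. It says that $\BB^{DM}_{\succ'}$ lies inside the set of $\beta\in[0,1)$ satisfying (AIM-D-UB), (AIM-DM-UB) and (AIM-DM-LB), with equality when $\mathcal{D}=2^X\setminus\{\emptyset\}$.

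\textbf{Sufficiency.} Take any $\beta$ in $\BB^{DM}_{\succ'}$.
\begin{enumerate}
\item Under Sign Conflict there is $(w,z,S)\in U_{\succ'}$ with $\Delta_w\rho(z|S)<0$ and $\Delta_w\gamma_{\succ'}(z|S)\ge 0$. I would argue directly from the nonnegativity of $(*)$: $\Delta_w\rho(z|S)-\beta\Delta_w\gamma_{\succ'}(z|S)\ge 0$ is required, but the left side is strictly negative for every $\beta\ge 0$. This also covers the edge case $\Delta_w\gamma_{\succ'}(z|S)=0$, where the ratio $\beta^{DM}$ is undefined (or $-\infty$ by the convention on $a/0$). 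So the set is empty.
\item Under Bound Conflict: DM, $\beta\ge\beta^{DM}_{\succ'}(w,z,S)>\beta^{DM}_{\succ'}(w',z',S')\ge\beta$, a contradiction.
\item Under Bound Conflict: DM \& D, (AIM-D-UB) gives $\beta\le\beta^{D}_{\succ'}(w',S')<\beta^{DM}_{\succ'}(w,z,S)\le\beta$, again a contradiction.
\end{enumerate}

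\textbf{Converse.} Assume $\mathcal{D}=2^X\setminus\{\emptyset\}$ and $\BB^{DM}_{\succ'}=\emptyset$. By equality in \Cref{Thm:RevealedBound_AIM_DM}, the interval
\[
\Bigl[\max\bigl\{\max_{L_{\succ'}}\beta^{DM}_{\succ'},0\bigr\},\ \min\bigl\{\min_{U_{\succ'}}\beta^{DM}_{\succ'},\min_{(w,S)}\beta^{D}_{\succ'}(w,S)\bigr\}\Bigr]\cap[0,1)
\]
is empty. Suppose Sign Conflict fails, so $\beta^{DM}_{\succ'}\ge0$ on $U_{\succ'}$ (including the value $+\infty$ when the denominator is zero). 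Every $\beta^D_{\succ'}\ge 0$. Hence the upper endpoint $m$ satisfies $m\ge0$, and emptiness must come from the lower endpoint $\ell=\max_{L_{\succ'}}\beta^{DM}_{\succ'}$ exceeding $m$. One subtlety is the cap at $1$: the interval could also be empty because $\ell\ge 1$. I would handle this by noting that \Cref{prop:WeakBound}(ii), a special case of (AIM-D-UB), gives $\beta^{D}_{\succ'}(b_S,S)\le\rho(b_S|S)/\gamma_{\succ'}(b_S|S)$. Moreover $\min_{(w,S)}\beta^D_{\succ'}<1$ unless $\rho=\gamma_{\succ'}$, which is the knife-edge case excluded in the footnote (some $x$ must have $\rho(x|S)<\gamma_{\succ'}(x|S)$). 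So $m<1$, and emptiness reduces to $\ell>m$. If $m$ is attained by a $U_{\succ'}$ term, that is Bound Conflict: DM; if it is attained by a $\beta^D$ term, that is Bound Conflict: DM \& D.

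I expect the main obstacle to be this boundary bookkeeping in the converse: the zero-denominator conventions, the half-open $[0,1)$, and making sure $m<1$ so that emptiness really is caused by one of the listed conflicts.
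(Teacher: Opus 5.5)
Your proposal is correct and follows the route the paper intends. The paper writes no separate proof, only the remark that rejection comes from a sign conflict or an empty intersection of the bounds. Like the paper, you combine \Cref{Coro:revealedPrefernece} with \Cref{Thm:RevealedBound_AIM_DM}: the inclusion gives sufficiency, and equality on the full domain gives the converse.

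Your boundary bookkeeping is sound and genuinely needed. Without the footnote's exclusion of $\rho=\gamma_{\succ'}$ the converse fails: every lower bound on $L_{\succ'}\neq\emptyset$ equals $1$, only $\beta=1$ would fit, and none of the three conditions holds. The aside invoking Proposition~\ref{prop:WeakBound}(ii) is vacuous, because $\beta^{D}_{\succ'}(b_S,S)$ equals that ratio by definition, so you can drop it.
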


For a preference $\succ'$ satisfying $y\succ' x$, these rejection channels need not be mutually exclusive: more than one condition may rule out the same preference ordering. In the simulation section, we examine how often each channel contributes to revealed preference.

\section{Simulation}\label{Sect:Simulation}

In this section, we present simulation results with two primary objectives: to examine how the proposed bounds narrow the range of admissible values for the permeation parameter $\beta$, and to evaluate the extent to which the procedure recovers the underlying preference ordering in the AIM-DM. To achieve this, we simulate choice data based on the AIM-DM. Since the choice data are simulated directly from the AIM-DM, we can compare the bounds derived solely from the AIM-D in \Cref{prop:BoundSpecial} with those incorporating the additional restrictions from \Cref{Thm:RevealedBound_AIM_DM}. As discussed below, we also allow for limited data so that $\mathcal{D}\neq 2^X$. Hence, the bounds and revealed preferences could be further improved by using the results from \Cref{App:LimitedData}. Here, for illustrative purposes and simplicity, we only present the results using the bounds from \Cref{Thm:RevealedBound_AIM_DM} and \Cref{prop:BoundSpecial}.

To simulate a choice rule $\rho$, we first need to simulate one external stimulus $\gamma$ and one internal attention rule $\delta$. For $\gamma$, we draw each $\gamma(A|S)$ independently from $\mathrm{U}(0,1)$ and normalize across all nonempty subsets of $S$ so that $\sum_{\emptyset \neq A \subseteq S}\gamma(A|S)=1.$ 

For $\delta$, since the AIM-DM requires internal monotonicity, drawing weights independently for each menu is insufficient. To increase the generality of the results with respect to the randomness used to generate the attention rule, we use six different algorithms, described below, each of which guarantees monotonicity by construction.

The first two algorithms follow the random logit attention model of \citet{Brady2016Menu-DependentFeasibility}. A single weight $w_A$ is drawn once for every nonempty subset $A$ of the grand set $X$, and
\[
  \mu(A |S) \;=\; \frac{w_A}{\sum_{B \subseteq S,\, B \neq \emptyset} w_B}.
\]
Monotonicity holds automatically since \citet{Brady2016Menu-DependentFeasibility} is a special case of \citet{Cattaneo2020AModel}. To see this, enlarging the menu increases the denominator while $w_A$ remains unchanged. Algorithm~1 uses $w_A \sim \mathrm{Exp}(1)$; Algorithm~2 uses $w_A \sim \mathrm{U}(0,1)$.

Algorithms~3--6 use a ``bottom-up'' construction in which menus are processed in ascending order of cardinality. For each proper subset $A \subset S$, define the binding ceiling
\[
  \bar{u}_A \;=\; \min\bigl\{\mu(A |S') : S' \subset S,\; A \subseteq S' \in \mathcal{D}\bigr\},
\]
set to $1$ if no sub-menu exists. Drawing each proper-subset weight from a distribution supported on $[0, \bar{u}_A]$ directly enforces monotonicity by construction.

These four algorithms are summarized in Table~\ref{tab:bottomup}. The row is the draw distribution: either $W_A \sim \mathrm{U}(0,\bar{u}_A)$ (\emph{Uniform}) or $W_A \sim \bar{u}_A \cdot \mathrm{Beta}(2,1)$ (\emph{Skewed}), where the latter concentrates mass near the ceiling and tends to assign higher attention to proper subsets. The column governs what happens when the total proper-subset mass exceeds one. When $\sum_{A \subset S} W_A \leq 1$, all four algorithms agree: the residual $1 - \sum_{A \subset S} W_A$ is assigned to $\mu(S |S)$. When the sum exceeds one, the \emph{Full-Attention Positive} rule adds a fresh draw $W_S \sim \mathrm{U}(0,1)$ and renormalizes the full weight vector, so $\mu(S |S) > 0$; the \emph{Full-Attention Zero} rule renormalizes only the proper-subset weights and sets $\mu(S |S) = 0$.

\begin{table}[ht]
\centering

\begin{tabular}{lcc}
\hline
& \multicolumn{2}{c}{When $\sum_{A \subset S} W_A > 1$} \\
\cmidrule(lr){2-3}
Draw distribution & Full-Attention Positive & Full-Attention Zero \\
\hline
Uniform:\; $W_A \sim \mathrm{U}(0,\,\bar{u}_A)$
  & Algorithm 3 & Algorithm 4 \\[4pt]
Skewed:\; $W_A \sim \bar{u}_A \cdot \mathrm{Beta}(2,1)$
  & Algorithm 5 & Algorithm 6 \\
\hline
\end{tabular}
\caption{Algorithms 3--6: bottom-up construction.}
\label{tab:bottomup}
\end{table}

The simulation proceeds as follows. The grand set consists of $N=5$ alternatives. For each observation $i$, we simulate one external stimulus $\gamma_i$ and one internal attention rule $\delta_i$ using one of the six algorithms; both are defined over all nonempty subsets of $X$. To allow for limited data, we randomly select a domain $\mathcal{D}_i \subseteq 2^X \setminus \{\emptyset\}$ by drawing a random integer $c$ uniformly from $\{1,\ldots,2^N - 1\}$ and then sampling $c$ subsets without replacement. We generate $66{,}000$ observations in total, allocated equally across the six algorithms. Within each algorithm, the observations are divided equally into $11$ groups, one for each value $\beta \in \{0, 0.1, \ldots, 1\}$, yielding $1{,}000$ independent simulations per $(\text{algorithm},\beta)$ cell. Choice probabilities are then computed according to the AIM-DM, yielding the grand dataset $\{(\rho_i, \gamma_i)\}_{i=1}^{66{,}000}$, where $\rho_i$ and $\gamma_i$ are defined on $\mathcal{D}_i$.

\subsection{Revealed permeation}

In this subsection, we examine the difference in predictive power between the two
models. Using the $66{,}000$ pairs of choice data $\{(\rho_i,\gamma_i)\}_{i=1}^{66{,}000}$,
we compute the bound sets $\{\textbf{B}^{DM}(\rho_i,\gamma_i)\}_{i=1}^{66{,}000}$ and
$\{\textbf{B}^{D}(\rho_i,\gamma_i)\}_{i=1}^{66{,}000}$ based on \Cref{prop:BoundSpecial} and
\Cref{Thm:RevealedBound_AIM_DM}. Here, $\textbf{B}^{D}$ denotes the revealed set
under AIM-D, obtained by applying the inequality (AIM-D-UB) from
\Cref{prop:BoundSpecial}. Since the six algorithms generate qualitatively similar results, we pool the
observations across all algorithms.

As discussed, the bounds under AIM-D and AIM-DM differ qualitatively. Under AIM-DM,
one can obtain not only an upper bound but also a lower bound on $\beta$. We examine
the averages of the upper and lower bounds from each model. Since the raw average of a
bound depends on the underlying attention parameters, we condition on the true
permeation parameter $\beta$ throughout.

\begin{figure}[!htbp]
    \centering
    \includegraphics[width=0.6\linewidth]{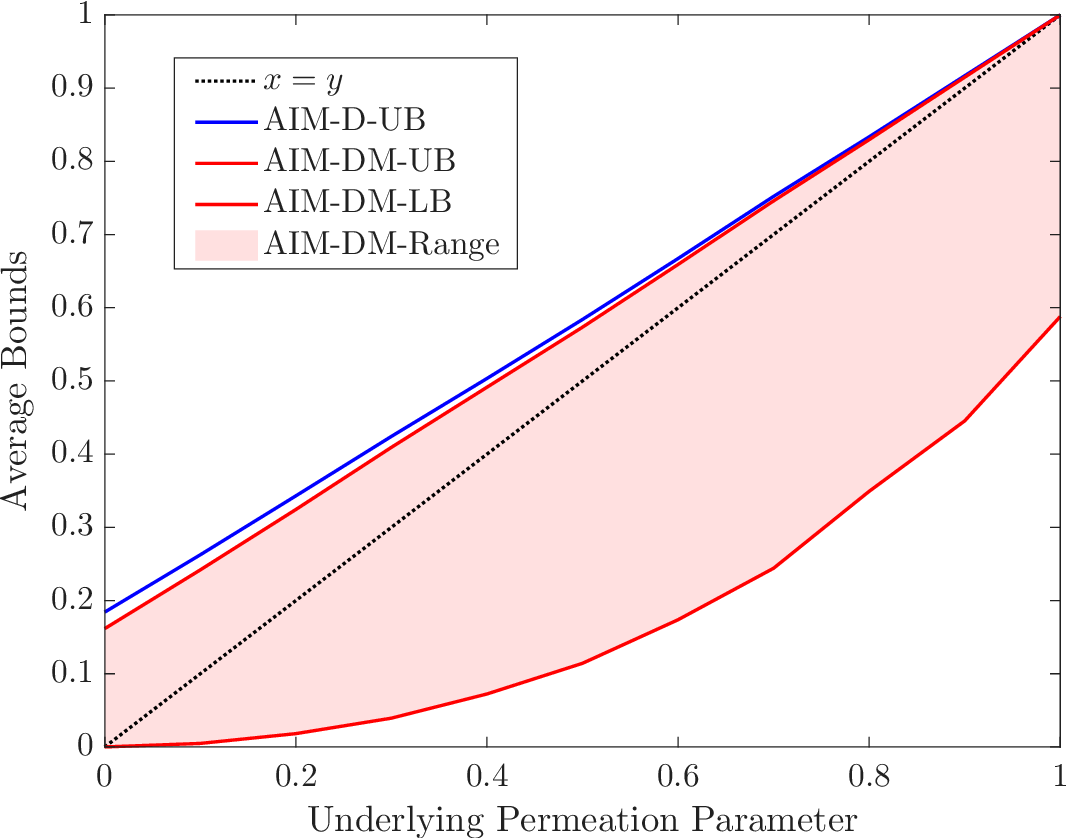}
    \caption{Average bounds as a function of the underlying permeation parameter.}
    \label{Fig; Average Bound}
\end{figure}

In \Cref{Fig; Average Bound}, the $x$-axis represents the underlying permeation
parameter $\beta \in \{0,0.1,\ldots,1\}$. For each value of $\beta$, there are
$6{,}000$ pairs of choice probabilities and attention parameters. The 45-degree
dotted diagonal is the $x=y$ reference line. The blue solid line plots the average
upper bound over the $6{,}000$ observations for each value of $\beta$ under the
general AIM-D model. The red solid lines plot the average maximum and minimum of the
set $\textbf{B}^{DM}(\rho_i,\gamma_i)$, respectively.\footnote{The set
$\textbf{B}^{DM}(\rho_i,\gamma_i)$ may be a disjoint union of multiple intervals, so
only the overall maximum and minimum of the set are reported.}

As the figure shows, the upper bound from AIM-D is revised downward  once the
monotonicity restriction is imposed (upper red solid line; AIM-DM-UB). Moreover,
unlike the AIM-D model, which yields no lower bound, the AIM-DM specification also
provides a positive lower bound (lower red solid line; AIM-DM-LB) that moves closely
with the true value of $\beta$, thereby narrowing the identified set for the
permeation parameter. One caveat is that, although the shaded region depicts the
range of AIM-DM bounds, the revealed range in each simulation can contain gaps
because $\textbf{B}^{DM}(\rho_i,\gamma_i)$ may be a disjoint union of several
intervals.

\subsection{Revealed preference}

In this subsection, we examine how much preference information can be recovered from
the simulated data. Since the data are generated from an AIM-DM, it is guaranteed that
\[
  \bigcup_{\succ \in \mathcal{P}} \textbf{B}^{DM}_\succ(\rho_i,\gamma_i) \neq \emptyset
\]
for every $i$, because the true underlying preference always yields a non-empty
revealed bound. What is less clear, however, is how many preference orderings remain
consistent with the data and how many pairwise revealed preferences can be
identified. Building on \Cref{Prop:Revealed Preference}, our simulation allows us to
address both questions.

\Cref{Prop:Revealed Preference} decomposes revealed preference into three conditions.
To conclude that $x$ is preferred to $y$, one must reject every preference ordering
that places $y$ above $x$. Our simulation further examines how much identifying power
each condition has on its own. Specifically, we ask how many pairwise revealed
preferences can be established when only a single condition---either (Sign Conflict:
DM), (Bound Conflict: DM), or (Bound Conflict: DM\,\&\,D)---is used in isolation.
The results are presented in \Cref{Fig: RevealedPreference}. As before, we condition
on the underlying permeation parameter and pool across all six algorithms, since we
find no significant qualitative differences across them.

\begin{figure}[!htbp]
    \centering
    \includegraphics[width=0.6\linewidth]{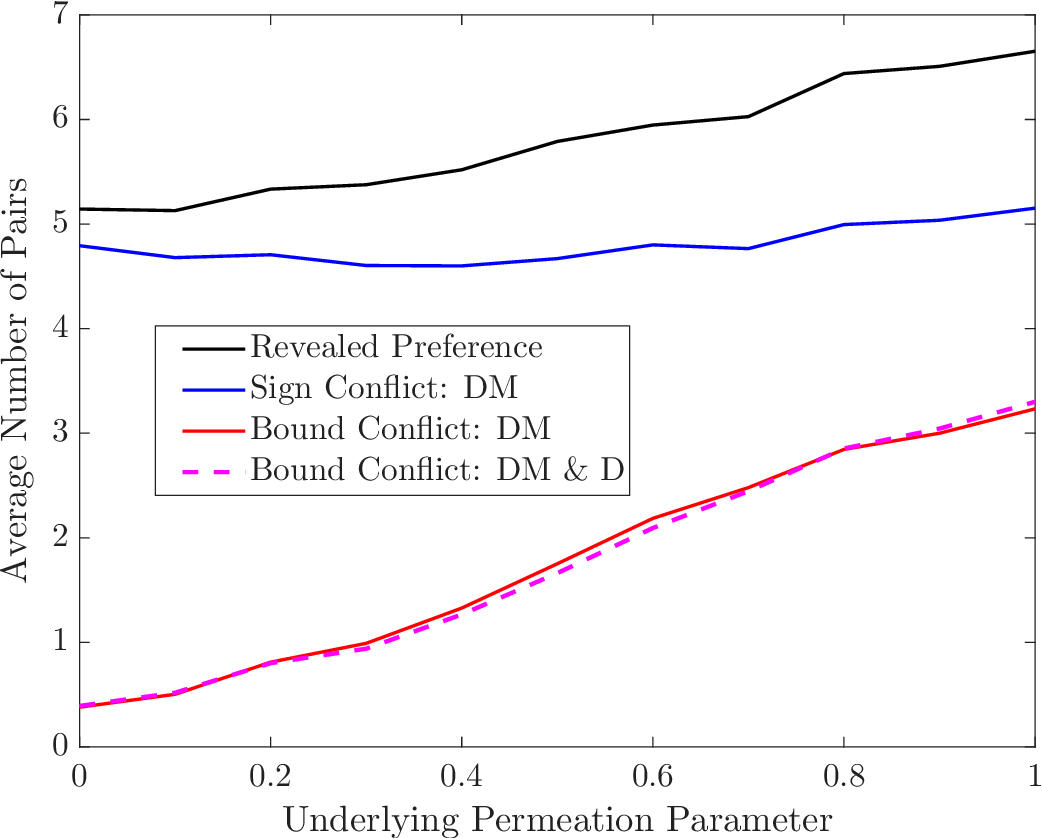}
    \caption{Average pairs of revealed preference as a function of the underlying permeation parameter.}
    \label{Fig: RevealedPreference}
\end{figure}

Since the grand set has $N=5$ alternatives, the maximum number of pairwise revealed
preferences is $10$ (because $\binom{5}{2}=10$). The black solid line shows that, on
average, between $5$ and $7$ pairs are revealed, indicating that more than half of
the preference relation can be pinned down in this simulation. The count is increasing
in $\beta$, reflecting that revealed preference becomes more powerful as the external
stimulus is more likely to enter the consideration set.

For the single-condition results, the (Sign Conflict: DM) condition, shown by the
blue solid line, is fairly stable across different values of $\beta$. In contrast,
the two bound-conflict conditions increase with $\beta$: the red solid line
corresponds to (Bound Conflict: DM), and the magenta dotted line corresponds to
(Bound Conflict: DM\,\&\,D). This pattern is natural. As $\beta$ increases, the lower
bound under AIM-DM also increases, as shown in \Cref{Fig; Average Bound}. A higher
lower bound is more likely to exceed an upper bound and therefore more likely to
reject a preference ordering. The two bound-conflict conditions produce similar
counts, which is consistent with \Cref{Fig; Average Bound}, where the AIM-DM and
AIM-D upper bounds are close to each other.

\section{Conclusion}

This paper develops a revealed-preference framework for studying how external stimuli affect attention and choice. We introduce the Attentional Interference Model, which captures two forces in consideration-set formation: proactive interference, which limits the entry of external stimuli, and retroactive interference, which allows external stimuli to displace internally generated consideration. These forces are summarized by the permeation and displacement parameters.

Our main results characterize what can be inferred about these parameters from observed choice data and observable presentation frequencies. In the unrestricted model, zero permeation can never be ruled out, but the data still impose sharp upper bounds on the extent to which external stimuli can affect attention. For the benchmark cases of Perfect Displacement, Perfect Retention, and Unified Attentional Interference, we obtain explicit permeation bounds and show that they are ordered, with Perfect Displacement yielding the most conservative bound.

We then show that stronger conclusions follow when internal attention satisfies monotonicity. Under this restriction, the model yields both upper and lower bounds on permeation, becomes falsifiable, and generates revealed-preference implications. In particular, regularity violations in observed choice can be attributed to external stimuli, allowing the analyst to infer a minimum degree of attentional permeation. The simulations illustrate these points. Imposing monotonic internal attention narrows the revealed set for the permeation parameter, produces lower bounds that track the true value of permeation, and recovers a substantial portion of the underlying preference ordering.

Overall, the analysis highlights a trade-off that is central to revealed attention models. With unrestricted attention, external stimuli can be bounded only from above, since internal attention is flexible enough to rationalize any choice pattern. With additional behavioral structure, however, choice data can reveal not only how large the effect of external stimuli could be, but also how large it must be. This distinction is relevant for applications in marketing, platform design, and policy environments in which external cues are used to guide attention without restricting choice. Future work may extend the framework to richer forms of stimulus heterogeneity, dynamic exposure, or environments in which external stimuli affect both attention and preferences. Such extensions would further clarify how observational choice data can be used to distinguish intrinsic decision making from the attentional effects of the environments in which choices are made.

\newpage

\appendix
\appendixpage

In \Cref{App:Proofs}, we provide the proofs for the results in the paper. In \Cref{App:LimitedData}, we provide a full characterization for the case of AIM-DM where $\mathcal{D} \neq 2^X \setminus \emptyset$.

\section{Proofs}\label{App:Proofs}

\subsection{Proof of \Cref{Thm:RevealAttInt}}

We first provide the following lemma, which will prove to be useful for the proof.

\begin{lemma}[Multiplicative Seperation over Cumulative Choice]
Let $\mu$ be $\alpha$-$\beta$ interfered by $\gamma$ with internal attention rule $\delta$. Then, for any $\succ$,
$$\Lmu= (1-\beta)\Ldelta+\beta \left(\alpha \Lgamma +(1-\alpha)\Lmu \Lgamma \right)$$
\end{lemma}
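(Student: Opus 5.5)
The plan is to interpret the lower cumulative choice rule as the probability of a \emph{down-set event} and then use the product structure of the union term. Fix $\succ$, $S\in\mathcal{D}$ and $x\in S$, and let $L:=\{z\in S: x\succ z\}\cup\{x\}$ be the weak lower contour set of $x$ in $S$.

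For any attention rule $\nu\in\mathcal{M}$, the induced choice $\nu_\succ(y|S)$ is the $\nu$-mass of consideration sets whose $\succ$-maximum is $y$. Hence
\begin{equation*}
\underline{\nu}_\succ(x|S)=1-\sum_{y\succ x}\nu_\succ(y|S)=\sum_{A\subseteq L}\nu(A|S).
\end{equation*}
In words, it is the mass of nonempty consideration sets that contain nothing strictly better than $x$. I will verify this identity first, directly from the definition of $\mu_\succ$ and the fact that the sets $S(y,\succ)$ partition the nonempty subsets of $S$. It applies verbatim to $\mu$, $\delta$ and $\gamma$.

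Next I substitute the $\alpha$-$\beta$ interference formula into $\sum_{A\subseteq L}\mu(A|S)$ and sum term by term.
\begin{itemize}
\item The $(1-\beta)\delta$ term contributes $(1-\beta)\Ldelta(x|S)$.
\item The $\beta\alpha\gamma$ term contributes $\beta\alpha\Lgamma(x|S)$.
\item For the retention term, the key observation is that $B\cup C\subseteq L$ if and only if $B\subseteq L$ and $C\subseteq L$. Each ordered pair $(B,C)$ is counted exactly once, under $A=B\cup C$, so
\begin{equation*}
\sum_{A\subseteq L}\sum_{(B,C):B\cup C=A}\gamma(B|S)\delta(C|S)=\Bigl(\sum_{B\subseteq L}\gamma(B|S)\Bigr)\Bigl(\sum_{C\subseteq L}\delta(C|S)\Bigr).
\end{equation*}
This equals $\Lgamma(x|S)\,\Ldelta(x|S)$, because $\gamma$ and $\delta$ are drawn independently in the two-stage process.
\end{itemize}

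Collecting the three pieces gives
\begin{equation*}
\Lmu=(1-\beta)\Ldelta+\beta\bigl(\alpha\Lgamma+(1-\alpha)\Ldelta\,\Lgamma\bigr).
\end{equation*}
This differs from the displayed statement in one factor: the cross term has $\Ldelta$ where the statement has $\Lmu$. I read the $\Lmu$ there as a typo for $\Ldelta$. With $\Ldelta$ the identity is exactly multiplicative separation, and it is the form needed later to solve for $\Ldelta$ in terms of observables when constructing the cumulative internal rule in the proof of \Cref{Thm:RevealAttInt}. A quick $|S|=2$ check (with $x=b_S$, where $\Lmu=\Ldelta=\Lgamma=1$, or $x=w_S$) will confirm that $\Ldelta$, not $\Lmu$, is the correct factor.

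The only real obstacle is the bookkeeping in the union term. One must make sure that the sum over pairs with $B\cup C=A$, taken over all $A\subseteq L$, counts each pair exactly once. One must also confirm that the empty set contributes nothing, which holds because $\gamma(\emptyset|S)=\delta(\emptyset|S)=0$. Once the down-set factorization is in place, the rest is linear.
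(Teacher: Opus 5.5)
Your proof is correct and is essentially the paper's argument: sum the interference formula over consideration sets contained in the lower contour set of $x$, and factor the retention term using the fact that $B\cup C\subseteq L$ iff $B\subseteq L$ and $C\subseteq L$. The paper's own computation also ends with the cross term $\Ldelta\,\Lgamma$, which confirms that the $\Lmu$ in the statement is a typo. Two small remarks. The paper writes the strict contour set $\{y: x\succ y\}$, whereas you use the weak one that matches the definition of $\Lrho$; the factorization holds for any $L\subseteq S$, so this does not matter. Your $x=b_S$ sanity check cannot tell the two versions apart, since both reduce to $1=1$; only the $x=w_S$ check does.
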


\begin{proof} Let $L_\succ(x):=\{y\in X: x \succ y\}$. Then,
    \begin{align*}
    \Lmu(x|S)& \equiv (1-\beta)\!\!\!\!\!\sum_{A\subseteq L_\succ(x)\cap S}\!\!\!\!\!\delta(A|S)+\beta \left( \alpha\!\!\!\!\!\sum_{A\subseteq L_\succ(x)\cap S}\!\!\!\!\!\gamma(A|S)+ (1-\alpha)\!\!\!\!\!\sum_{(B,C): B\cup C\subseteq L_\succ(x)\cap S}\!\!\!\!\!\delta(B|S)\gamma(C|S) \right)\\
    &=(1-\beta)\Ldelta(x|S)+\beta\left( \alpha \Lgamma(x|S)+ (1-\alpha)\left(\sum_{B\subseteq L_\succ(x)\cap S}\!\!\!\!\!\delta(B|S)\right)\left(\sum_{C\subseteq L_\succ(x)\cap S}\!\!\!\!\!\gamma(C|S)\right) \right)\\
    &=(1-\beta)\Ldelta(x|S)+\alpha\beta \Lgamma(x|S)+\beta (1-\alpha) \Ldelta(x|S)\Lgamma(x|S).
\end{align*}
where the second equality follows from the fact $B\cup C\subseteq L_\succ(x)\cap S\Leftrightarrow[B\subseteq L_\succ(x)\cap S\ \land\ C\subseteq L_\succ(x)\cap S]$.
\end{proof}

This lemma demonstrates how the cumulative representation enables a succinct  representation of the  of the interfered attention rule $\mu$. In fact, this rearrangement can lead to a full characterization in terms of the revealed arrange for the AIM model. Note that in \Cref{rmk:any_choice_go}, 

Given the lemma, we can rearrange the terms in the expression, so that we have 

$$\Ldelta=\frac{\Lmu-\alpha\beta \Lgamma}{(1-\beta)+\beta(1-\alpha)\Lgamma}$$

To ensure $\Ldelta$ corresponds to a well-defined cumulative choice rule, it is sufficient that to guarantee that 
\begin{enumerate}[label=(\roman*)]
    \item $\Ldelta(w_S|S)\geq 0$ for all $S$;
    \item $1 \geq \Ldelta(b_S|S)$ for all $S$;
    \item $\Ldelta(x|S)\geq \Ldelta(x'|S)$ for all $x,x'\in S$ such that $x\succ x'$.
\end{enumerate}

To prove theorem 1, we first prove $\BI_\succ(\rho,\gamma) \subseteq \Bigl\{(\alpha,\beta)\in \mathcal{A}:(\alpha,\beta)\text{ satisfies (AII)}\Bigr\}$.

If $(\alpha',\beta')\in\BI_\succ(\rho,\gamma)$, there exists an internal allocation rule $\delta \in \mathcal{M}$, and an attention rule $\mu \in \mathcal{M}$ such that $(\rho,\gamma)$ admits an $(\succ,\mu,\alpha',\beta',\delta)$ AIM representation. Given Lemma 1, we can rearrange the terms in the expression, so that we have

$$\sum_{A\subseteq L_\succ(x)\cap S}\delta(A|S)=\Ldelta(x|S)=\frac{\Lmu(x|S)-\alpha'\beta' \Lgamma(x|S)}{(1-\beta')+\beta'(1-\alpha')\Lgamma(x|S)}$$

Note that

$$\Lgamma(x|S)=\sum_{A\subseteq L_\succ(x)\cap S}\gamma(A|S)=\sum_{y\in L_\succ(x)\cap S}\gammaS(y|S)=1-\Ugamma(x|S)$$

Similarly, since $(\rho,\gamma)$ admits an $(\succ,\mu,\alpha',\beta',\delta)$ AIM representation, we also have

$$\Lmu(x|S)=\sum_{A\subseteq L_\succ(x)\cap S}\mu(A|S)=\sum_{y\in L_\succ(x)\cap S}\muS(y|S)=\sum_{y\in L_\succ(x)\cap S}\rho(y|S)=1-\rho(x|S)$$

Therefore, we can rewrite the cumulative internal allocation rule as

$$\Ldelta(x|S)=\frac{\Lmu(x|S)-\alpha'\beta' \Lgamma(x|S)}{(1-\beta')+\beta'(1-\alpha')\Lgamma(x|S)}=\frac{1-\Urho(x|S)-\alpha'\beta'(1-\Ugamma(x|S))}{(1-\beta')+\beta'(1-\alpha')(1-\Ugamma(x|S))}$$

For all $x \in S \setminus\{b_S\}$, define $w:=\min(\succ,\{y \in S:y \succ x\})$. Since $\delta(A|S) \geq 0$ for all $A \subseteq S$, we have

$$\frac{1-\Urho(x|S)-\alpha'\beta'(1-\Ugamma(x|S))}{(1-\beta')+\beta'(1-\alpha')(1-\Ugamma(x|S))}=\Ldelta(x|S) \leq \Ldelta(w|S)=\frac{1-\Urho(w|S)-\alpha'\beta'(1-\Ugamma(w|S))}{(1-\beta')+\beta'(1-\alpha')(1-\Ugamma(w|S))}$$

Notice that $\Urho(x|S)=\Urho(w|S)+\rho(x|S)$ and $\Ugamma(x|S)=\Ugamma(w|S)+\gammaS(x|S)$. Therefore, we can get

$$\frac{1-\Urho(x|S)-\alpha'\beta'(1-\Ugamma(x|S))}{(1-\beta')+\beta'(1-\alpha')(1-\Ugamma(x|S))} \leq \frac{1-\Urho(x|S)+\rho(x|S)-\alpha'\beta'(1-\Ugamma(x|S)+\gammaS(x|S))}{(1-\beta')+\beta'(1-\alpha')(1-\Ugamma(w|S)+\gammaS(x|S))}$$

To simplify, we omit $(x,S)$ in the inequality, and let $K=1-\alpha'\beta'+\alpha'\beta'\Ugamma$. The inequality can be rewritten as

\begin{align*}
\frac{K-\Urho}{K-\beta'\Ugamma} &\leq \frac{K-\Urho+\rho-\alpha'\beta'\gammaS}{K-\beta'\Ugamma+\beta'\gammaS-\alpha'\beta'\gammaS}\\ \Leftrightarrow (\beta'\gammaS-\rho)K &\leq \beta'\Urho\gammaS-\alpha'\beta'\Urho\gammaS+\beta'\rho\Ugamma+\alpha'(\beta')^2\gammaS\Ugamma
\\ \Leftrightarrow (\beta'\gammaS-\rho)(1-\alpha'\beta'+\alpha'\beta'\Ugamma) &\leq \beta'\Urho\gammaS-\alpha'\beta'\Urho\gammaS+\beta'\rho\Ugamma+\alpha'(\beta')^2\gammaS\Ugamma \\ \Leftrightarrow \alpha'\beta'(\rho(1-\Ugamma)+\gammaS(\Urho-\beta')) &\leq \rho(1-\beta'\Ugamma)+\beta'\gammaS(\Urho-1)
\end{align*}

This proves that (AII) is held for the case $x \in S \setminus \{b_S\}$.

For the case $x = b_S$, we have 

$$\rho(b_S|S)=\muS(b_S|S)=(1-\beta')\deltaS(b_S|S)+\beta'(\alpha'\gammaS(b_S|S)+(1-\alpha')\sum_{A \in S(b_s,\succ)}\sum_{(B,C):B\cup C=A}\gamma(B|S)\delta(C|S))$$

Notice that if $A \in S(b_S,\succ)$, then $A\cup C \in S(b_S,\succ)$ for all $C \subseteq S$. It implies that

$$\sum_{A \in S(b_s,\succ)}\sum_{(B,C):B\cup C=A}\gamma(B|S)\delta(C|S)) \geq \sum_{A \in S(b_s,\succ)}(\gamma(A|S)\sum_{C \subseteq S}\delta(C|S))=\sum_{A \in S(b_s,\succ)}\gamma(A|S)=\gammaS(b_S|S)$$

Hence, we get the inequality, which is actually (WB-b)

\begin{align*}
\rho(b_S|S)&=(1-\beta')\deltaS(b_S|S)+\beta'(\alpha'\gammaS(b_S|S)+(1-\alpha')\sum_{A \in S(b_s,\succ)}\sum_{(B,C):B\cup C=A}\gamma(B|S)\delta(C|S)) \\ &\geq \beta'(\alpha'\gammaS(b_S|S)+(1-\alpha')\gammaS(b_S|S)) = \beta'\gammaS(b_S|S)
\end{align*}

Now, we can check whether (AII) is held for the case $x = b_S$. By definition, we have $\Ugamma(b_S|S)=\rho(b_S|S)$ and $\Ugamma(b_S|S)=\gammaS(b_S|S)$, and thus (AII) reduces to $(\alpha'\beta'-1)(\rho(b_S|S)-\beta'\gammaS(b_S|S)) \leq 0$. It is held due to (WB-b), which we got above.

Therefore, $(\alpha',\beta') \in \Bigl\{(\alpha,\beta)\in \mathcal{A}:(\alpha,\beta)\text{ satisfies (AII)}\Bigr\}$.

We next prove $\Bigl\{(\alpha,\beta)\in \mathcal{A}:(\alpha,\beta)\text{ satisfies (AII)}\Bigr\} \subseteq \BI_\succ(\rho,\gamma)$

Define

\begin{align*}
     \delta(A|S):= 
    \begin{cases}
        \frac{1-\Urho(x|S)+\rho(x|S)-\alpha'\beta'(1-\Ugamma(x|S)+\gammaS(x|S))}{(1-\beta')+\beta'(1-\alpha')(1-\Ugamma(x|S)+\gammaS(x|S))}-\frac{1-\Urho(x|S)-\alpha'\beta'(1-\Ugamma(x|S))}{(1-\beta')+\beta'(1-\alpha')(1-\Ugamma(x|S))} & \text{ if }  A=\{x\}, x \in S \setminus \{b_S\} \\
         1-\frac{1-\rho(b_S|S)-\alpha'\beta'(1-\gammaS(b_S,S))}{(1-\beta')+\beta'(1-\alpha')(1-\gammaS(b_S,S))} & \text{ if } A=\{b_S\} \\
         0 & \text{ otherwise }
    \end{cases}
\end{align*}

and

$$\mu(A|S)=(1-\beta')\delta(A|S)+\beta'(\alpha'\gamma(A|S)+(1-\alpha')\sum_{(B,C):B\cup C=A}\gamma(A|S)\delta(C|S))$$ for all $A \subseteq S$

We claim that $(\rho,\gamma)$ admits an $(\succ,\mu,\alpha',\beta',\delta)$ AIM representation.

First, we prove that $\delta$ is well-defined; that is, $\delta(A|S) \geq 0$ for all $A \subseteq S$ and $\sum_{A \subseteq S}\delta(A|S)=1$.

For $x \in S \setminus \{b_S\}$, let $K=1-\alpha'\beta'+\alpha'\beta'\Ugamma(x|S)$, and we can rewrite

{\small
\begin{align*}
&\delta(\{x\}|S)\\
=&\frac{K-\Urho(x|S)+\rho(x|S)-\alpha'\beta'\gammaS(x|S)}{K-\beta'\Ugamma(x|S)+\beta'\gammaS(x|S)-\alpha'\beta'\gammaS(x|S)}-\frac{K-\Urho(x|S)}{K-\beta'\Ugamma(x|S))}\\
=&\frac{\beta'\Urho(x|S)\gammaS(x|S)-\alpha'\beta'\Urho(x|S)\gammaS(x|S)+\beta'\rho(x|S)\Ugamma(x|S)+\alpha'{\beta'}^2\gammaS(x|S)\Ugamma(x|S)-(\beta'\gammaS(x|S)-\rho(x|S))K}{(K-\beta'\Ugamma(x|S)+\beta'\gammaS(x|S)-\alpha'\beta'\gammaS(x|S))(K-\beta'\Ugamma(x|S)))}\\
=&\frac{\rho(x|S)(1-\beta'\Ugamma(x|S))+\beta'\gammaS(x|S)(\Urho(x|S)-1)-\alpha'\beta'(\rho(x|S)(1-\Ugamma(x|S))+\gammaS(x|S)(\Urho(x|S)-\beta'))}{(K-\beta'\Ugamma(x|S)+\beta'\gammaS(x|S)-\alpha'\beta'\gammaS(x|S))(K-\beta'\Ugamma(x|S)))}
\end{align*}
}

Since $(\alpha',\beta') \in \Bigl\{(\alpha,\beta)\in \mathcal{A}:(\alpha,\beta)\text{ satisfies (AII)}\Bigr\}$, we can apply (AII) to the numerator of $\delta(\{x\}|S)$ and get $\delta(\{x\}|S) \geq 0$.

Similarly, for $x = b_S$, we have

\begin{align*}
\delta(\{b_S\}|S)&=1-\frac{1-\rho(b_S|S)-\alpha'\beta'(1-\gammaS(b_S|S))}{(1-\beta')+\beta'(1-\alpha')(1-\gammaS(b_S|S))}\\
&=\frac{\rho(b_S|S)-\beta'\gammaS(b_S|S)}{1-\alpha'\beta'+\alpha'\beta'\gammaS(b_S|S)-\beta'\gammaS(b_S|S)}
\end{align*}

Since (AII) for $b_S$ implies that $\rho(b_S|S)-\beta'\gammaS(b_S|S) \geq 0$, we have $\delta(\{b_S\}|S) \geq 0$.

To see $\sum_{A \subseteq S}\delta(A|S)=1$, we label the alternatives by the preference $\succ$; that is, $b_S \succ...\succ x_{i+1} \succ x_i \succ... \succ x_2 \succ x_1=w_S$. Notice that $\Urho(x_i|S)=\Urho(x_{i+1}|S)+\rho(x_i|S)$ and $\Ugamma(x_i|S)=\Ugamma(x_{i+1}|S)+\gammaS(x_i|S)$. One can use the telescope method to get

\begin{align*}
\sum_{A \subseteq S}\delta(A|S)&=\sum_{x \in S}\delta(\{x\}|S)=\delta(\{w_S\}|S)+\delta(\{x_2\}|S)+\delta(\{x_3\}|S)+...+\delta(\{b_S\}|S)\\
&=1-\frac{1-\Urho(w_S|S)-\alpha'\beta'(1-\Ugamma(w_S|S))}{(1-\beta')+\beta'(1-\alpha')(1-\Ugamma(w_S|S))}=1-\frac{1-1-\alpha'\beta'(1-1)}{(1-\beta')+\beta'(1-\alpha')(1-1)}\\
&=1
\end{align*}

We have shown that $\delta$ is well-defined above. The last step is to check whether $(\rho,\gamma)$ admits an $(\succ,\mu,\alpha',\beta',\delta)$ AIM representation.

Notice that if we use the label above, we have $\Ldelta(x_i|S)=\frac{1-\Urho(x_i|S)-\alpha'\beta'(1-\Ugamma(x_i|S))}{(1-\beta')+\beta'(1-\alpha')(1-\Ugamma(x_i|S))}$ by the telescope method.

By Lemma 1, for all $x_i \in S \setminus \{b_S\}$, we have
\begin{align*}
\muS(x_i|S)&=\Lmu(x_{i+1}|S)-\Lmu(x_i|S)\\
&=((1-\beta')\Ldelta(x_{i+1}|S)+\beta'(\alpha'\Lgamma(x_{i+1}|S)+(1-\alpha')\Ldelta(x_{i+1}|S)\Lgamma(x_{i+1}|S))) \\
  &\ \ \ \ \  \ \ \ \ \         -((1-\beta')\Ldelta(x_i|S)+\beta'(\alpha'\Lgamma(x_i|S)+(1-\alpha')\Ldelta(x_i|S)\Lgamma(x_i|S)))\\
&=(1-\Urho(x_{i+1}|S))-(1-\Urho(x_i|S))\\
&=\Urho(x_i|S)-\Urho(x_{i+1}|S))=\rho(x_i|S)
\end{align*}
and
\begin{align*}
\muS(b_S|S)&=1-\Lmu(b_S|S)\\
&=1-((1-\beta')\Ldelta(b_S|S)+\beta'(\alpha'\Lgamma(b_S|S)+(1-\alpha')\Ldelta(x_i|S)\Lgamma(b_S|S)))\\
&=1-(1-\Urho(b_S|S))\\
&=\Urho(b_S|S)=\rho(b_S|S)
\end{align*}

Therefore, $(\alpha',\beta') \in \BI_\succ(\rho,\gamma)$.

This completes the proof.

\subsection{Proof of \Cref{prop:BoundSpecial}}

According to \Cref{Thm:RevealAttInt}, we have (AII) below.
\begin{align*}
    \alpha\beta(\rho(1-\Ugamma)+\gammaS(\Urho-\beta)) \leq \rho(1-\beta\Ugamma)+\beta\gammaS(\Urho-1)
\end{align*}
\begin{enumerate}
\item[(i)] \emph{(Perfect Displacement)} if $\alpha=1$, (AII) becomes
\begin{align*}
&\rho(\beta-\beta\Ugamma)+\beta\gammaS(\Urho-\beta) \leq \rho(1-\beta\Ugamma)+\beta\gammaS(\Urho-1) \\ \Leftrightarrow &\beta\gammaS(1-\beta) \leq \rho(1-\beta) \\ \Leftrightarrow &\beta \leq \frac{\rho}{\gammaS}
\end{align*}

\item[(ii)] \emph{(Perfect Retention)} if $\alpha=0$, (AII) becomes
\begin{align*}
&0 \leq \rho(1-\beta\Ugamma)+\beta\gammaS(\Urho-1) \\ \Leftrightarrow &\beta(\gammaS(1-\Urho)+\Ugamma) \leq \rho \\ \Leftrightarrow &\beta \leq \frac{\rho}{\gammaS(1-\Urho)+\Ugamma}
\end{align*}

\item[(iii)] \emph{(Unified Interference)} if $\alpha=\beta$, (AII) becomes
\begin{align*}
&\rho(\beta^2-\beta^2\Ugamma)+\beta\gammaS(\beta\Urho-\beta^2) \leq \rho(1-\beta\Ugamma)+\beta\gammaS(\Urho-1) \\ \Leftrightarrow &\beta\gammaS(1-\beta^2-\Urho+\beta\Urho) \leq \rho(1-\beta^2-\beta\Ugamma+\beta^2\Ugamma) \\ \Leftrightarrow &\beta\gammaS(1-\beta)(1+\beta-\Urho) \leq \rho(1-\beta)(1+\beta-\beta\Ugamma) \\ \Leftrightarrow &\beta\gammaS(1+\beta-\Urho) \leq \rho(1+\beta-\beta\Ugamma) \\ \Leftrightarrow &\gammaS\beta^2-(\rho(1-\Ugamma)-\gammaS(1-\Urho))\beta-\rho \leq 0
\end{align*}
Let $\Phi:= \rho(1-\Ugamma)-\gammaS(1-\Urho)$, then
\begin{align*}
&\gammaS\beta^2-(\rho(1-\Ugamma)-\gammaS(1-\Urho))\beta-\rho \\ = &\gammaS\beta^2-\Phi\beta-\rho \\ = &\gammaS(\beta-\frac{\Phi+\sqrt{\Phi^2+4\rho\gammaS}}{2\gammaS}) (\beta-\frac{\Phi-\sqrt{\Phi^2+4\rho\gammaS}}{2\gammaS}) \leq 0
\end{align*}
Note that $\sqrt{\Phi^2+4\rho\gammaS} > \sqrt{\Phi^2} = |\Phi| \geq \Phi$, it implies that $\beta-\frac{\Phi-\sqrt{\Phi^2+4\rho\gammaS}}{2\gammaS} > 0$.
Hence, we have
$$\beta-\frac{\Phi+\sqrt{\Phi^2+4\rho\gammaS}}{2\gammaS} \leq 0,$$ and thus $$\beta \leq \frac{\Phi+\sqrt{\Phi^2+4\rho\gammaS}}{2\gammaS}.$$
\end{enumerate}

\subsection{Proof of \Cref{Thm:Monotonicity_alpha}}
\par\ \par

Fix preference $\succ$ and $\alpha\in[0,1]$ and $\beta \in [0,1)$. Suppose that $(\rho,\gamma)$ admits a $(\succ,\mu,\alpha,\beta,\delta)$-AIM representation. Now, take any $\alpha'<\alpha$. We aim to show that $(\rho,\gamma)$ also admits a $(\succ,\mu,\alpha',\beta,\delta)$-AIM representation. If this is correct, then the admissible set of $\beta$ under $\alpha$ is smaller than the set under $\alpha'$, for any $\alpha'<\alpha$. 
\par
By assumption, there exists an internal attention rule $\delta$ such that $\rho=\mu_\succ$ where 
$$\mu(A|S)=(1-\beta)\delta(A|S) + \beta \left( \alpha \gamma(A|S)  + (1-\alpha)\sum_{(B,C): B\cup C =A} \gamma(B|S)\delta(C|S) \right).$$
Equivalently, we have
$$\Lrho(x|S)=(1-\beta)\Ldelta(x|S)+\beta\left(\alpha\Lgamma(x|S)+(1-\alpha)\Lgamma(x|S)\Ldelta(x|S)\right).$$
We are looking for some internal attention rule $\delta'$ such that
$$\Lrho(x|S)=(1-\beta)\Ldelta'(x|S)+\beta\left(\alpha'\Lgamma(x|S)+(1-\alpha')\Lgamma(x|S)\Ldelta'(x|S)\right).$$
Now, consider the equation
$$(1-\beta)\Ldelta+\beta\left(\alpha\Lgamma+(1-\alpha)\Lgamma\Ldelta\right)=(1-\beta)\Ldelta'+\beta\left(\alpha'\Lgamma+(1-\alpha')\Lgamma\Ldelta'\right).$$
We can solve for $\Ldelta(x|S)$:
$$\Ldelta'(x|S)=\frac{(1-\beta)\Ldelta(x|S)+[(1-\alpha)\Ldelta(x|S)+(\alpha-\alpha')]\beta\Lgamma(x|S)}{1-\beta+(1-\alpha')\beta\Lgamma(x|S)}.$$
We need to argue that $\Ldelta'$ defined by this equation is a well-defined cumulative choice rule. First, since $\alpha>\alpha'$, $\Ldelta'(x|S)\geq 0$. Second, if $x^*$ is the $\succ$-best alternative in menu $S$, then 
$$\Ldelta'(x^*|S)=\frac{(1-\beta)+[(1-\alpha)+(\alpha-\alpha')]\beta}{1-\beta+(1-\alpha')\beta}=1$$
as $\Ldelta(x^*|S)=\Lgamma(x^*|S)=1$. Finally, it remains to show $\Ldelta'(x|S)\geq\Ldelta'(y,S)$ whenever $x\succ y$. To show this, it suffices to show that $\Ldelta'$ is increasing in $\Ldelta$ and $\Lgamma$. Clearly, $\Ldelta'$ is increasing in $\Ldelta$ because $\Ldelta$ appears in the numerator only with a positive coefficient. It is also increasing in $\Lgamma$ because 
\begin{align*}
&\frac{(1-\alpha)\Ldelta+(\alpha-\alpha')}{1-\alpha'}\geq\frac{(1-\beta)\Ldelta}{1-\beta}=\Ldelta\\
\Leftrightarrow\ &(1-\alpha)\Ldelta+(\alpha-\alpha')\geq(1-\alpha')\Ldelta\\
\Leftrightarrow\ &\alpha-\alpha'\geq(\alpha-\alpha')\Ldelta\\
\Leftrightarrow\ &1\geq\Ldelta.
\end{align*}
Note that we also use the assumption $\alpha>\alpha'$ in our reasoning here. Thus, $\Ldelta'$ is increasing in $\Ldelta$ and $\Lgamma$. So, for $x\succ y$, since $\Ldelta(x|S)\geq\Ldelta(y,S)$ and $\Lgamma(x|S)\geq\Lgamma(y,S)$, we have
\begin{align*}
\Ldelta'(x|S)&=\frac{(1-\beta)\Ldelta(x|S)+[(1-\alpha)\Ldelta(x|S)+(\alpha-\alpha')]\beta\Lgamma(x|S)}{1-\beta+(1-\alpha')\beta\Lgamma(x|S)}\\
&\geq\frac{(1-\beta)\Ldelta(y,S)+[(1-\alpha)\Ldelta(y,S)+(\alpha-\alpha')]\beta\Lgamma(y,S)}{1-\beta+(1-\alpha')\beta\Lgamma(y,S)}=\Ldelta'(y,S).
\end{align*}
We conclude that $\Ldelta'$ is a well-defined cumulative choice rule. We have constructed a $(\succ,\mu,\alpha',\beta,\delta)$-AIM representation given a $(\succ,\mu,\alpha,\beta,\delta)$-AIM representation.



\subsection{Proof of \Cref{Thm:RevealedBound_AIM_DM}}

Take any $\beta\in\textbf{B}^{DM}_\succ(\rho,\gamma)$. By definition, there exists a monotonic $\delta$ such that $(\rho,\gamma)$ admits an $(\succ,\mu,\beta,\delta)$-AIM-DM representation for some $\mu$ and $\delta$. By Proposition 3-(i), $(\rho,\gamma)$ satisfies (AIM-D-UB). Moreover, let $$P_\succ(x,S|\beta)=\frac{\rho(x|S)-\beta\gammaS(x|S)}{1-\beta}$$ for all $x,S$. Then $P_\succ(\cdot,\cdot|\beta)$ follows the Random Attention model. By \Cref{Thm:RAM}, $$\Delta_xP_\succ(y,S|\beta)\coloneqq P_\succ(y,S\setminus x|\beta)-P_\succ(y,S|\beta)=\frac{\Delta_x\rho(y|S)-\beta\Delta_x\gammaS(y|S)}{1-\beta}\geq 0$$ whenever $x\succ y$. Specifically, for all $(x,y,S)\in U_\succ$ (i.e., $x,y\in S$, $x\succ y$, $S\setminus x\in\mathcal{D}$, and $\Delta_x\gammaS(y|S)\geq0$), $\Delta_xP_\succ(y,S|\beta)\geq 0$ implies $\beta\leq\frac{\Delta_x\rho(y|S)}{\Delta_x\gammaS(y|S)}$. Thus, $\beta$ satisfies (AIM-DM-UB). Similarly, for all $(x,y,S)\in L_\succ$, $\Delta_xP_\succ(y,S|\beta)\geq 0$ implies $\beta\geq\frac{\Delta_x\rho(y|S)}{\Delta_x\gammaS(y|S)}$. Thus, $\beta$ satisfies (AIM-DM-LB). Consequently, the set inclusion in the statement holds. 
\par
Assume that $\mathcal{D}=2^X\setminus\{\emptyset\}$. Now, we argue that the set inclusion holds with equality. Take any $\beta\in[0,1)$ that satisfies (AIM-D-UB), (AIM-DM-UB), and (AIM-DM-LB). Define $P_\succ(x,S|\beta)$ as above. First, we verify that $P_\succ(\cdot,\cdot|\beta)$ is a random choice rule. Because $\rho(\cdot,S)$ and $\gammaS(\cdot,S)$ are both probability measures over $S$, we have $$\sum_{x\in S}P_\succ(x,S|\beta)=\frac{\sum_{x\in S}\rho(x|S)-\beta\sum_{x\in S}\gammaS(x|S)}{1-\beta}=\frac{1-\beta}{1-\beta}=1.$$ Because of (AIM-D-UB), $\rho(x|S)-\beta\gammaS(x|S)\geq 0$ for all $x$ and $S$. Hence, $P(x,S|\beta)\geq 0$ for all $x,S$. It follows that for any menu $S$, $P(\cdot,S|\beta)$ is a probability measure over $S$.
\par
Then, we apply \Cref{Thm:RAM} to argue that $P_\succ(\cdot,\cdot|\beta)$ can be represented by the Random Attention Model with preference $\succ$. Take any $x,y$ with $x\succ y$ and any menu $S$. We want to show $$\Delta_xP_\succ(y,S|\beta)=\frac{\Delta_x\rho(y|S)-\beta\Delta_x\gammaS(y|S)}{1-\beta}\geq0.$$ Consider first $\Delta_x\gammaS(y|S)\geq0$. By (AIM-DM-UB), $\beta\leq\frac{\Delta_x\rho(y|S)}{\Delta_x\gammaS(y|S)}$, implying $\Delta_x\rho(y|S)\geq\beta\Delta_x\gammaS(y|S)$. If $\Delta_x\gammaS(y|S)<0$, then by (AIM-DM-LB), we have $\beta\geq\frac{\Delta_x\rho(y|S)}{\Delta_x\gammaS(y|S)}$. We still obtain $\Delta_x\rho(y|S)\geq\beta\Delta_x\gammaS(y|S)$. Consequently, $\Delta_xP_\succ(y,S|\beta)\geq0$ whenever $x\succ y$.
\par
By \Cref{Thm:RAM}, there exists a monotonic $\delta\in\mathcal{M}$ such that $\delta_\succ=P_\succ(|\beta)$. It follows that $$\rho(x|S)=(1-\beta)\sum_{A\in S(x,\succ)}\delta(A|S)+\beta\sum_{A\in S(x,\succ)}\gamma(A|S)$$ for all $x,S$. Hence, $(\rho,\gamma)$ admits an $(\succ,\mu,\beta,\delta)$-AIM-DM representation. This shows the set equality and completes the proof.

\section{Monotonic Attention and Limited Data}\label{App:LimitedData}
In this subsection, we consider an arbitrary collection of menus $\mathcal{D}$ from which we observe random choice behaviors. The following result from \citet{Cattaneo2020AModel} characterizes random choice rules that are defined on $\mathcal{D}$ and consistent with the random attention model.\footnote{This result is stated in the supplementary document of \citet{Cattaneo2020AModel}.}
\begin{thm}
    Consider a strict preference relation $\succ$ over $X$ and a random choice rule $P$ defined on a collection of menus $\mathcal{D}$. The followings are equivalent:
    \begin{enumerate}
        \item The random choice rule $P$ is consistent with the Random Attention Model with the preference $\succ$.
        \item For any collection $\{(x_i,S_i)\}_{i=1}^m$ such that (i) $x_i$'s are distinct, (ii) $S_i\in\mathcal{D}$ for all $i$, (iii) $\{x_1,\cdots,x_m\}\subset\cap_{i=1}^mS_i$, and (iv) $y_i\succ x_i$ for all $i$ and all $y_i\in S_i\setminus\cap_{j=1}^mS_j$, we have $$\sum_{i=1}^mP(x_i|S_i)\leq 1.$$
    \end{enumerate}
\end{thm}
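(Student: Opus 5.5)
The plan is to prove the two directions separately. Necessity is a direct counting argument. Sufficiency is an extension argument that reduces to \Cref{Thm:RAM} on the full domain $2^X\setminus\{\emptyset\}$.

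\emph{Necessity (1$\Rightarrow$2).} I read ``consistent with the Random Attention Model'' as: there is a monotonic attention rule $\mu$ on all nonempty menus whose induced rule $\mu_\succ$ agrees with $P$ on $\mathcal{D}$. If the definition instead requires monotonicity only between menus of $\mathcal{D}$, I would first have to extend $\mu$, and that step would merge with the sufficiency argument. Fix a collection $\{(x_i,S_i)\}_{i=1}^m$ satisfying (i)--(iv), and let $S_0:=\cap_{j}S_j$, which contains every $x_i$ by (iii).
\begin{enumerate}[label=(\alph*)]
\item Take any $T\subseteq S_i$ in which $x_i$ is $\succ$-best. By (iv), every element of $S_i\setminus S_0$ is better than $x_i$, so $T\subseteq S_0$.
\item Removing elements of $S_i\setminus S_0$ one at a time never removes an element of $T$. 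Iterated monotonicity therefore gives $\mu(T|S_i)\le\mu(T|S_0)$.
\item Hence $P(x_i|S_i)\le\sum_{T\subseteq S_0,\ x_i=\max(\succ,T)}\mu(T|S_0)$.
\item The $x_i$ are distinct by (i), so the families of sets $T$ whose best element is $x_i$ are pairwise disjoint across $i$.
\end{enumerate}
Summing over $i$ then gives $\sum_i P(x_i|S_i)\le\sum_{T\subseteq S_0}\mu(T|S_0)=1$.

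\emph{Sufficiency (2$\Rightarrow$1).} By \Cref{Thm:RAM}, it suffices to extend $P$ from $\mathcal{D}$ to a choice rule $\tilde P$ on $2^X\setminus\{\emptyset\}$ satisfying $\tilde P(y|S\setminus\{x\})\ge\tilde P(y|S)$ whenever $x\succ y$. I would set this up as a linear feasibility problem.
\begin{itemize}
\item Variables: $\tilde P(x|S)$ for all $x\in S\subseteq X$.
\item Constraints: nonnegativity, $\sum_{x\in S}\tilde P(x|S)=1$, agreement with $P$ on $\mathcal{D}$, and the inferior-regularity inequalities.
\end{itemize}
Farkas' lemma then says infeasibility is equivalent to a nonnegative combination of these constraints yielding $0\ge$ something positive.

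The task is to show that every such certificate can be reduced to a violated inequality of the form in (2). Each regularity constraint links $(y,S)$ to $(y,S\setminus\{x\})$ with $x$ better than $y$. A chain of them therefore carries $\tilde P(y|S_i)$ down to $\tilde P(y|S')$ for any $S'\subseteq S_i$ obtained by deleting only elements better than $y$. Several such chains for distinct alternatives $x_i$ meet at a common menu $S_0=\cap S_i$, where the adding-up constraint caps their sum at $1$. This is exactly the structure of conditions (i)--(iv).

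\emph{Main obstacle.} The hard step is the extreme-ray analysis: showing that the dual cone's generators are exactly these ``star'' configurations. Certificates could a priori reuse the same alternative, or route through several intermediate menus. To rule these out, I would first recast the problem as a flow/transportation problem, so that Hall-type or max-flow--min-cut duality applies and minimal cuts correspond to collections $\{(x_i,S_i)\}$.

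If that recasting fails, I would fall back on an explicit construction. For $S\notin\mathcal{D}$, define $\tilde P(\cdot|S)$ greedily along $\succ$ from the worst element upward. Give each $y$ the smallest mass forced by the menus $S'\in\mathcal{D}$ with $S'\supseteq S$ and $S'\setminus S$ better than $y$, namely $\max_{S'}P(y|S')$. Then use condition (2) to show these lower bounds sum to at most $1$, placing the leftover mass on the best element of $S$.
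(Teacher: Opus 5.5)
Your necessity argument is correct. It is a more careful version of the only argument the paper gives: the paper imagines that $\bigcap_j S_j$ is observed and invokes regularity for inferior alternatives, while you run the same comparison directly on $\mu$ through the intermediate menus.

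Your full-domain reading of ``consistent'' is forced, not optional. Take $\mathcal{D}=\{\{a,x_1,x_2\},\{b,x_1,x_2\}\}$ with $a,b\succ x_1,x_2$. The data $P(x_1|\{a,x_1,x_2\})=P(x_2|\{b,x_1,x_2\})=1$ violate (2). Yet they are rationalized by a rule that is monotone only across nested menus of $\mathcal{D}$, since there are none. So under that reading necessity fails outright, and no extension step could save it.

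The genuine gap is sufficiency. The paper does not prove it either (it cites the supplement of \citet{Cattaneo2020AModel}), and you leave it open. Your Farkas/max-flow route stops exactly at the step that carries all the content: showing that every infeasibility certificate collapses to a star configuration. You give no reason why certificates that chain through several unobserved menus, or reuse an alternative, can be discarded.

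Your fallback is the right idea, but you only claim that the leftover mass is nonnegative. What \Cref{Thm:RAM} needs is $\tilde P(y|S\setminus\{x\})\ge\tilde P(y|S)$ for all $x\succ y$. Write $R:=S\setminus\{x\}$. The greedy rule is not obviously regular in two places:
\begin{itemize}
\item when $y$ is the best element of $R$, since $y$ gets leftover mass in $R$ but a forced lower bound in $S$;
\item when $S\notin\mathcal{D}$ but $R\in\mathcal{D}$, since the value at $R$ is data and cannot be adjusted.
\end{itemize}

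Both checks follow from one observation. Let $R\subseteq X$ and, for distinct $z$ in some $Z\subseteq R$, let $S_z\in\mathcal{D}$ satisfy $S_z\supseteq R$, with every element of $S_z\setminus R$ better than $z$. Then $\bigcap_{z}S_z\supseteq R$, so $\{(z,S_z)\}_{z\in Z}$ satisfies (i)--(iv), and hence $\sum_{z\in Z}P(z|S_z)\le 1$.

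Write your lower bound as $L(y|S):=\max\{P(y|S'):S'\in\mathcal{D},\ S'\supseteq S,\ S'\setminus S\subseteq\{w:w\succ y\}\}$, set to zero if the set is empty. Since $x\succ y$ forces $y\neq b_S$, $\tilde P(y|S)$ is either $0$ or $P(y|S_y)$ for a witness $S_y\in\mathcal{D}$ with $S_y\supseteq S$ and $S_y\setminus S$ better than $y$. Take $S_y=S$ if $S\in\mathcal{D}$, and the maximiser otherwise. Because $x\succ y$, the same $S_y$ is also a witness relative to $R$. There are three cases:
\begin{itemize}
\item If $R\notin\mathcal{D}$ and $y\neq b_R$, then $\tilde P(y|R)=L(y|R)\ge P(y|S_y)$ by definition of the maximum.
\item If $R\notin\mathcal{D}$ and $y=b_R$, apply the observation to $R$ with $S_y$ for $y$ and the maximisers for $z\neq y$. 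This gives $P(y|S_y)\le 1-\sum_{z\neq y}L(z|R)=\tilde P(y|R)$.
\item If $R\in\mathcal{D}$, apply it with $S_y$ for $y$ and $S_z=R$ for $z\neq y$. This gives $P(y|S_y)\le 1-\sum_{z\neq y}P(z|R)=P(y|R)$.
\end{itemize}
Nonnegativity of the leftover mass is the observation with $R=S$ and the maximisers alone.

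With these checks your extension meets the hypothesis of \Cref{Thm:RAM}. You would then have a self-contained proof of the direction the paper only cites.
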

To see the necessity of the second statement, suppose that we were able to observe the choice distribution on the menu $\cap_{j=1}^mS_j$. If $\sum_{i=1}^mP(x_i|S_i)>1$, then we have $$\sum_{i=1}^mP(x_i|S_i)>1\geq\sum_{i=1}^mP(x_i|\cap_{j=1}^mS_j).$$ Hence, there must exist some $i\leq m$ such that $P(x_i|S_i)>P(x_i|\cap_{j=1}^mS_j)$, a violation of regularity. However, such a violation cannot occur under the random attention model when $y_i\succ x_i$ for every $y_i\in S_i\setminus\cap_{j=1}^mS_j$. Therefore, $\sum_{i=1}^mP(x_i|S_i)\leq 1$ follows.

Note that
\begin{align*}
    \sum_{i=1}^m\frac{\rho(x_i|S_i)-\beta\gammaS(x_i|S_i)}{1-\beta}\leq 1\Leftrightarrow\sum_{i=1}^m\rho(x_i|S_i)-1\leq\beta\left[\sum_{i=1}^m\gammaS(x_i|S_i)-1\right].
\end{align*}
Thus, depending on the sign of $\sum_{i=1}^m\gammaS(x_i|S_i)-1$, this inequality yields an upper bound or a lower bound for $\beta$. Let
\begin{align*}
O_\succ\coloneqq\left\{\right.\{(x_i,S_i)\}_{i=1}^m&:m\leq|X|;\ x_i\neq x_j\ \forall\ i<j\leq m;\ S_i\in\mathcal{D}\ \forall\ i\leq m;\\
&x_i\in\cap_{j=1}^mS_j\ \forall\ i\leq m;\ y_i\succ x_i\ \forall\ i\ \forall\ y_i\in S_i\setminus\cap_{j=1}^mS_j\left.\right\}
\end{align*}
As in Section 4, consider the following two sets:
$$U_\succ^*\coloneqq\left\{\{(x_i,S_i)\}_{i=1}^m\in O_\succ:\sum_{i=1}^m\gammaS(x_i|S_i)-1<0\right\}$$
and
$$L_\succ^*\coloneqq\left\{\{(x_i,S_i)\}_{i=1}^m\in O_\succ:\sum_{i=1}^m\gammaS(x_i|S_i)-1\geq0\right\}.$$\
Then, we have the following two sets of inequalities for $\beta$:
\begin{equation*}\tag{AIM-DM-UB*}
\beta \le  
\frac{\sum_{i=1}^m\rho(x_i|S_i)-1}{\sum_{i=1}^m\gammaS(x_i|S_i)-1} \ \ \text{for all } \{(x_i,S_i)\}_{i=1}^m\in U_\succ^* 
\end{equation*}

\begin{equation*}\tag{AIM-DM-LB*}
\beta \ge 
\frac{\sum_{i=1}^m\rho(x_i|S_i)-1}{\sum_{i=1}^m\gammaS(x_i|S_i)-1} \ \  \text{for all \ } \{(x_i,S_i)\}_{i=1}^m\in L_\succ^* 
\end{equation*}
Moreover, to ensure that the internal choice behavior $\frac{\rho(x_i|S_i)-\beta\gammaS(x_i|S_i)}{1-\beta}$ is a well-defined random choice rule, $\beta$ needs to satisfy 
\begin{equation*}\tag{AIM-D-UB}
\beta \leq \frac{\rho(x|S)}{\gammaS(x|S)} \ \  \text{for all \ }x\in 
S\in\mathcal{D}
\end{equation*}
We have the following result.

\begin{thm}[Revealed Bound - Monotonicity and Limited Data]\label{thm: monotonicity and limited attention}
Given a dataset $\mathcal{D}$,
$$\BB_\succ^{DM}(\rho,\gamma)=\{\beta \in[0,1): \beta \text{ satisfies (AIM-D-UB), (AIM-DM-UB*) and (AIM-DM-LB*)} \}.$$

\end{thm}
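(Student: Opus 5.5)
The proof follows the template of the proof of \Cref{Thm:RevealedBound_AIM_DM}. The only change is that \Cref{Thm:RAM} is replaced by the limited-data characterization of \citet{Cattaneo2020AModel} stated just above. Fix $\succ$ and, for $\beta\in[0,1)$, define the candidate internal induced choice rule on $\mathcal{D}$ by
\[
P_\succ(x,S|\beta):=\frac{\rho(x|S)-\beta\gammaS(x|S)}{1-\beta}.
\]
Under perfect displacement, $\rho=(1-\beta)\delta_\succ+\beta\gammaS$ menu by menu. Hence $(\rho,\gamma)$ admits an $(\succ,\mu,\beta,\delta)$ AIM-DM representation if and only if $P_\succ(\cdot,\cdot|\beta)$ is a random choice rule on $\mathcal{D}$ that is consistent with the Random Attention Model under $\succ$. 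Both directions then reduce to translating these two requirements into inequalities on $\beta$.

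\textbf{Necessity.} Take $\beta\in\BB^{DM}_\succ(\rho,\gamma)$. First, (AIM-D-UB) follows from \Cref{prop:BoundSpecial}(i); equivalently, it follows from $\delta_\succ(x|S)\ge 0$. Second, $\delta$ is monotonic and $\delta_\succ=P_\succ(\cdot,\cdot|\beta)$, so the limited-data theorem gives $\sum_{i=1}^m P_\succ(x_i,S_i|\beta)\le 1$ for every collection $\{(x_i,S_i)\}_{i=1}^m\in O_\succ$. Multiplying by $1-\beta>0$ and using the equivalence displayed before the theorem, this becomes
\[
\sum_i\rho(x_i|S_i)-1\le\beta\Bigl[\sum_i\gammaS(x_i|S_i)-1\Bigr].
\]
We divide by the bracket, splitting on its sign. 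If the bracket is negative, the inequality flips and gives (AIM-DM-UB*) on $U^*_\succ$. If it is positive, it gives (AIM-DM-LB*) on $L^*_\succ$. When the bracket is zero, the inequality reads $\sum_i\rho(x_i|S_i)\le 1$, independent of $\beta$. I would treat this edge case explicitly using the paper's convention $a/0=\infty$, together with a remark that the zero-denominator case is read as a feasibility condition rather than a bound. The restriction $m\le|X|$ in $O_\succ$ is harmless, since the $x_i$ are distinct.

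\textbf{Sufficiency.} Take $\beta\in[0,1)$ satisfying (AIM-D-UB), (AIM-DM-UB*) and (AIM-DM-LB*). The entries $P_\succ(x,S|\beta)$ sum to $(1-\beta)/(1-\beta)=1$ on each menu, and they are nonnegative by (AIM-D-UB). So $P_\succ(\cdot,\cdot|\beta)$ is a random choice rule on $\mathcal{D}$. For any collection in $O_\succ$, reverse the algebra above: either (AIM-DM-UB*) or (AIM-DM-LB*) applies according to the sign of the bracket, and in each case it yields $\sum_iP_\succ(x_i,S_i|\beta)\le1$. By the limited-data theorem, there is a monotonic attention rule $\delta$ on $\mathcal{D}$ with $\delta_\succ=P_\succ(\cdot,\cdot|\beta)$. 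Finally, define
\[
\mu:=(1-\beta)\delta+\beta\gamma .
\]
This $\mu$ results from $1$-$\beta$ interference, and $\mu_\succ=(1-\beta)\delta_\succ+\beta\gammaS=\rho$, so $\beta\in\BB^{DM}_\succ(\rho,\gamma)$.

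The main obstacle is bookkeeping rather than analysis. The sign conventions in $U^*_\succ$ and $L^*_\succ$ must match the direction in which the inequality flips, and the boundary case where $\sum_i\gammaS(x_i|S_i)=1$ must be handled so that it correctly imposes $\sum_i\rho(x_i|S_i)\le1$. There is also a point to check: monotonicity of $\delta$ is only meaningful for menus in $\mathcal{D}$, so I must confirm that the cited supplementary theorem delivers a monotonic attention rule defined on $\mathcal{D}$ in exactly the sense used in the definition of AIM-DM.
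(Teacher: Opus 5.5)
Your proposal is correct and follows the route the paper sketches just before the statement. You replace \Cref{Thm:RAM} by the limited-data characterization in the proof of \Cref{Thm:RevealedBound_AIM_DM}, rewrite $\sum_i P_\succ(x_i,S_i|\beta)\le 1$ as $\sum_i\rho(x_i|S_i)-1\le\beta\bigl[\sum_i\gammaS(x_i|S_i)-1\bigr]$ with the sign split over $U^*_\succ$ and $L^*_\succ$, and use (AIM-D-UB) so that $P_\succ(\cdot,\cdot|\beta)$ is a choice rule.

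Both caveats you flag are genuine and are left implicit in the paper. First, the zero-bracket case should indeed be read as the feasibility condition $\sum_i\rho(x_i|S_i)\le 1$. Under the literal convention $0/0=\infty$, every $\beta$ would be rejected whenever some observed $(x,S)$ has $\rho(x|S)=\gammaS(x|S)=1$, e.g.\ a singleton menu. Second, \emph{monotonic} $\delta$ on a limited $\mathcal{D}$ has to be understood as consistency with the random attention model in the sense of the cited theorem. Effectively, this means a monotonic rule that extends to unobserved menus such as $\bigcap_j S_j$, because that extension is what the necessity direction uses.
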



\newpage
\printbibliography

\end{document}